	\newcommand{\xa}{{a_1}}
	\newcommand{\xaa}{{a_2}}
	\newcommand{\xb}{{b_1}}
	\newcommand{\xbb}{{b_2}}
	\newcommand{\xc}{{c_1}}
	\newcommand{\xcc}{{c_2}}
	\newcommand{\xd}{{d_1}}
	\newcommand{\xdd}{{d_2}}
	\newcommand{\PoA}{\frac{s^3+s^2+s+1}{s^2+s+1}}
	\newcommand{\mi}{\scalebox{0.7}[1.0]{-}}
	\newcommand{\ma}{.}
	\newcommand{\zeroone}{*}
	\newcommand{\UBone}{\frac{s^3+s^2+s+1}{s^3+2}}
	\newcommand{\UBtwo}{\frac{s^2+2s+1}{2s+1}}
	\newcommand{\UBthree}{\frac{s+1}{s}}
	\newcommand{\UBfour}{\frac{s^3-s^2+2s-1}{s^3-s^2+s-1}}
	\newcommand{\UBfive}{\frac{s+1}{2}}
	\newcommand{\UBsix}{\frac{s^2-s+1}{s^2-s}}
	\newcommand{\UBseven}{\frac{s^2}{2s-1}}
	\newcommand{\UBeight}{\UBthree}
	\newtheorem{theorem}{Theorem}
	\newtheorem{lemma}[theorem]{Lemma}
	\newtheorem{proposition}[theorem]{Proposition}
	\newtheorem{remark}{Remark}{\bfseries}{\itshape}
	\newtheorem{example}{Example}{\bfseries}{\itshape}
	\newtheorem{claim}{Claim}{\bfseries}{\itshape}
\definecolor{emphcol}{gray}{.75}
\newcommand{\take}[1]{\colorbox{emphcol}{$#1$}}
\begin{document}
\title{Selfish Jobs with Favorite Machines:\\
		 Price of Anarchy vs  Strong Price of Anarchy}
\author[1,2]{Cong Chen}%
\author[2]{Paolo Penna}
\author[1]{Yinfeng Xu}

%
\affil[1]{
School of Management, Xi'an Jiaotong University, Xi'an, China
}
\affil[2]{Department of Computer Science, ETH Zurich, Zurich, Switzerland
}
\maketitle

\begin{abstract}
	We consider the well-studied game-theoretic version of machine scheduling in which jobs correspond to \emph{self-interested} users  and machines correspond to resources. Here each user chooses a machine trying to minimize \emph{her own} cost, and such selfish behavior typically results in some \emph{equilibrium} which is not globally \emph{optimal}: An equilibrium is an allocation where no user can reduce her own cost by moving to another machine, which in general need not minimize the makespan, i.e., the maximum load over the machines. 
	
	We provide \emph{tight} bounds on two well-studied notions in algorithmic game theory, namely, the \emph{price of anarchy} and the \emph{strong price of anarchy} on machine scheduling setting which lies in between the related and the unrelated machine case. Both notions study the social cost (makespan) of the \emph{worst} equilibrium compared to the optimum, with the strong price of anarchy  restricting to a stronger form of equilibria. Our results extend a prior study comparing the price of anarchy to the strong price of anarchy for two related machines (Epstein \cite{Epstein2010}, Acta Informatica~2010), 
	thus providing further insights on the relation between these concepts. Our exact bounds give a qualitative and quantitative comparison between the two models. The bounds also show that the setting is indeed easier than the two unrelated machines: In the latter, the strong price of anarchy is $2$, while in ours it is strictly smaller.
\end{abstract}


\section{Introduction}
Scheduling jobs on \emph{unrelated} machines is a classical optimization problem. 
In this problem, each job has a (possibly different) processing time on each of the $m$ machines, and a schedule is simply an assignment of jobs to machines. For any such schedule, the load of a machine is the sum of all processing times of the jobs assigned to that machine. The objective is to find a schedule minimizing the \emph{makespan}, that is, the maximum load among the machines. 

In its \emph{game-theoretic} version, jobs correspond to \emph{self-interested} users who choose which machine to use accordingly without any centralized control, and naturally aim at minimizing their \emph{own cost} (i.e.~the load of the machine they choose).
This will result in some \emph{equilibrium} in which no player has an incentive to deviate, though the resulting schedule is not necessarily the optimal in terms of makespan. Indeed, even for two unrelated machines it is quite easy to find equilibria whose makespan is arbitrarily larger than the optimum. 

\begin{example}[bad equilibrium for two unrelated machines]\label{ex:bad-Nash}
	Consider two jobs  and two unrelated machines, where the processing times are given by the following table:
	\[\begin{array}{c|c|c}\hline
	& \text{job~1} & \text{job~2} \\\hline
	\text{machine~1}	& 1 & \take{s} \\ \hline
	\text{machine~2}	& 	\take{s} & 1 \\ \hline
	\end{array}
	\]
	The allocation represented by the gray box is a (pure Nash) equilibrium: if a job moves to the other machine, its own cost increases from $s$ to $s +1$. As the optimal makespan is $1$ (swap the allocation), even for two machines the ratio between the cost of the worst equilibrium and the optimum is unbounded (at least $s$).
\end{example}

The inefficiency of equilibria in games is a central concept in algorithmic game theory, as it quantifies the\emph{ efficiency loss} resulting from a \emph{selfish behavior} of the players. In particular, the following two notions received quite a lot of attention:
\begin{itemize}
	\item \emph{Price of Anarchy (PoA) \cite{koutsoupias1999worst}.} The price of anarchy is the ratio between cost of the \emph{worst} Nash equilibrium (NE) and the \emph{optimum}. 
	\item \emph{Strong Price of Anarchy (SPoA) \cite{andelman2009strong}.} The strong price of anarchy is the ratio between cost of the \emph{worst} strong Nash equilibrium (SE) with the \emph{optimum}. 
\end{itemize}
The only difference between the two notions is in the equilibrium concept: While in a Nash equilibrium no player can \emph{unilaterally} improve by deviating, in \emph{strong} Nash equilibrium no \emph{group} of players can deviate and, in this way,  all of them improve \cite{Aumann1959}.  For instance, the allocation in Example~\ref{ex:bad-Nash} is \emph{not} a SE because the two players could change strategy and both will improve.

Several works pointed out that the price of anarchy may be too pessimistic because, even for two unrelated machines, the price of anarchy is \emph{unbounded} (see Example~\ref{ex:bad-Nash} above). Research thus focused on providing bounds for the strong price of anarchy and comparing the two bounds according to the problem restriction: \vspace{-.4cm}
\begin{table}[h!]
	\centering
\begin{tabular}{cc|c|c}\hline
 & 	& $SPoA$ & $PoA$ \\\hline 
unrelated&	& $m$ & $\infty$  \\
& $m=2$ & $2$ & $\infty$ \\\hline
related	& & $\Theta(\frac{\log m}{(\log\log m)^2})$ &  $\Theta(\frac{\log m}{\log\log m})$ \\
& $m=2$ & $\frac{\sqrt{5}+1}{2} \simeq 1.618$ & $\frac{\sqrt{5}+1}{2} \simeq 1.618$ \\
& $m=3$ &  & $2$ \\\hline
identical	&  & $\frac{2m}{m+1}$ & $\frac{2m}{m+1}$ \\
& $m=2$ & $4/3$ & $4/3$ \\\hline
\end{tabular} 
\end{table}\vspace{-.4cm}

\noindent
In \emph{unrelated} machines, each job can have different processing times on different machines. In \emph{related} machines,  each job has a \emph{size} and each machine a \emph{speed}, and the processing time of a job on a machine is the size of the job divided by the speed of the machine. For \emph{identical} machines, the processing time of a job is the same on all machines. 
The main difference between the identical machines and the other cases is obviously that in the latter the processing times are different. For two related machines, the worst bound of $PoA$ and $SPoA$ is achieved only when the \emph{speed ratio} $s$ equals a specific value. Indeed, \cite{Epstein2010} characterize and compare the $PoA$ and the $SPoA$ for all values of $s$, showing that $SPoA<PoA$ only in a specific interval of values (see next section for details). The lower bound on the $PoA$ for two unrelated machines (Example~\ref{ex:bad-Nash}) is unbounded when the ratio between different processing time $s$ is unbounded. 

\subsection{Our contribution}
Following the approach by \cite{Epstein2010} on two related machines, we study the \emph{price of anarchy} and the \emph{strong price of anarchy} for the case of two machines though in a more general setting. Specifically, we consider the case of jobs with \emph{favorite machines} \cite{favorite} which is defined as follows (see Figure~\ref{fig:compare}).  
Each job has  a certain \emph{size}  and a \emph{favorite} machine; The processing time of a job on its favorite machine is just its \emph{size}, while on any non-favorite machine it is \emph{$s$ times slower}, where $s\geq 1$ is common parameter across all jobs. This parameter is the speed ratio when considering the special case of two related machines (see  Figure~\ref{fig:compare}). The model is also a restriction of unrelated machines and the bad NE in Example~\ref{ex:bad-Nash} corresponds to two jobs of size one in our model. That is, when $s$ in unbounded, the price of anarchy is \emph{unbounded} also in our model,
\begin{equation}\label{eq:bad:easy}
PoA \geq s \ .
\end{equation}
This motivates the study of strong equilibria and $SPoA$ in our setting. 
We provide exact bounds on both the $PoA$ and the $SPoA$ for all values of $s$. 
\begin{figure}[tb]
	\begin{minipage}{.48\textwidth}
		\centering
		\includegraphics[width=0.95\textwidth]{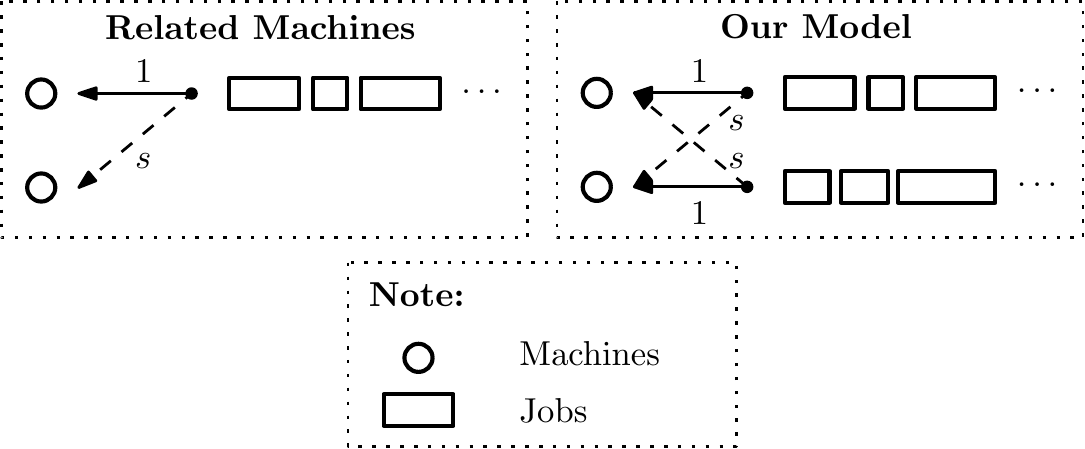}
		\caption{Relationship between 2 related machines and our model.}
		\label{fig:compare}
	\end{minipage}%
	\hfill
	\begin{minipage}{.48\textwidth}
		\centering
		\includegraphics[width=0.95\textwidth]{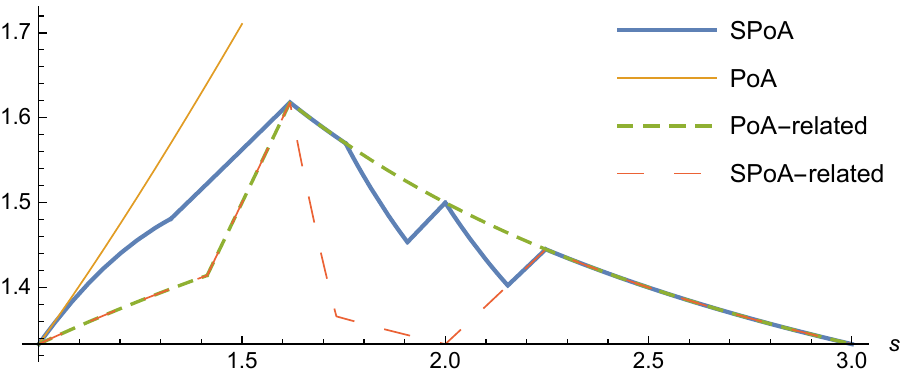}
		\caption{Comparison between $PoA$ and $SPoA$ in our model and in two related machines \cite{Epstein2010}.}
		\label{fig:bounds}
	\end{minipage}
\end{figure}

We first give an intuitive bound on $SPoA$  which holds for all possible values of $s$.
\begin{theorem}\label{thm.UB.simple}
	$SPoA \leq 1 + 1/s \leq 2$. 
\end{theorem}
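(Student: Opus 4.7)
\medskip
\noindent
\textbf{Proof plan.} Fix a strong equilibrium with makespan $L$, WLOG achieved on machine~1 so that $L_1=L$ and $L_2\le L$; let $OPT$ denote the optimal makespan, with $OPT_1,OPT_2$ the OPT assignments of loads $O_1,O_2\le OPT$. I write $p_j^{(k)}$ for $j$'s processing time on machine~$k$ and $J_k$ for the set of jobs on machine~$k$ in SE. The goal is $L\le(1+1/s)\,OPT$.

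\medskip
\noindent
First I would show that \emph{at least one SE-machine has load at most $OPT$}, by applying the SE condition to the grand coalition in which every player switches from its SE-machine to its OPT-machine. Because the allocation is a strong equilibrium, this joint deviation cannot strictly improve every player, so some job $j$ ends up with cost at least its SE-cost; since its post-deviation cost equals an OPT-load and is therefore at most $OPT$, the SE-load of $j$'s SE-machine is at most $OPT$. If this machine is machine~1 then $L\le OPT$ and we are done; otherwise $L_2\le OPT$.

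\medskip
\noindent
Next I would show the gap $L-L_2$ is at most $OPT/s$, so that $L\le L_2+OPT/s\le(1+1/s)\,OPT$. The key lever is the single-player Nash condition: for every $j\in J_1$, the deviation to machine~2 is unprofitable, giving $L\le L_2+p_j^{(2)}$. Assuming $L>OPT$, one cannot have $J_1\subseteq OPT_1$ (else $L=\sum_{j\in J_1}p_j^{(1)}\le O_1\le OPT$), so the set $J_1\cap OPT_2$ is nonempty, and the OPT constraint on machine~2 gives $\sum_{j\in J_1\cap OPT_2}p_j^{(2)}\le O_2\le OPT$. Combining this with the favorite-machines structure (each job's processing times on the two machines differ by exactly a factor $s$, depending on its favorite) and, where needed, the SE condition applied to the coalition of all of $J_1\cap OPT_2$ switching to machine~2, I would exhibit some $j\in J_1$ with $p_j^{(2)}\le OPT/s$ and conclude.

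\medskip
\noindent
\textbf{Main obstacle.} The crux is the second step. A naive single-player Nash bound $p_j^{(2)}\le OPT$ already delivers $L\le L_2+OPT\le 2\,OPT$, but this is only the coarser $SPoA\le 2$ (matching unrelated machines) and loses the favorite-machines structure entirely. The factor $1/s$ must be extracted from that structure---a job's non-favorite time is exactly $s$ times its favorite time---interacted with the OPT load constraints and with the coalition-deviation flexibility that distinguishes SE from NE. The subtlest sub-case is when $J_1\cap OPT_2$ contains only jobs whose favorite is machine~1, as they sit on their favorite in SE but pay the $s$-amplified cost on machine~2 in OPT; here a case split on favorites, together with the SE coalition condition on (sub)sets of $J_1\cap OPT_2$, is what forces the $1/s$ improvement.
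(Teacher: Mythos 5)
Your first step is sound and coincides with the paper's inequality \eqref{eq.l2}: applying the strong-equilibrium condition to the grand coalition that moves everyone to its optimal machine yields $\ell_2 \le opt$. The gap is in your second step. The intermediate claim you aim for, $L - L_2 \le OPT/s$, is false, and so is the sub-goal of exhibiting some $j \in J_1$ with $p_j^{(2)} \le OPT/s$. A counterexample is the paper's own tight instance LB8 in Table~\ref{tab.LU} (valid for $s \ge s_7 \approx 2.247$): machine~1 carries a single bad job $a_2$ of size $(s+1)/s^2$, contributing load $(s+1)/s$ there and $(s+1)/s^2$ on machine~2; machine~2 carries two good jobs with $\ell_2 = (s^2-1)/s^2$, and $opt = 1$. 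Here $J_1 \cap OPT_2 = \{a_2\}$ with $p_{a_2}^{(2)} = (s+1)/s^2 > 1/s = OPT/s$, and correspondingly $L - L_2 = (s+1)/s^2 > OPT/s$. The bound $1+1/s$ does hold for this instance --- it is exactly attained --- but only because $\ell_2$ is \emph{strictly} below $OPT$; the slack in your first inequality must be traded against the overshoot in the second, so the two estimates cannot be established independently and then added.

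The paper avoids this by never routing the argument through $\ell_2$ as an additive term bounded by $OPT$: it bounds $\ell_1$ directly against a weighted combination of the two \emph{optimal} loads, proving in each case something of the form $\ell_1 \le \ell_1^* + \ell_2^*/s \le 1 + 1/s$ (on LB8 this is tight: $\ell_1^* = 1$, $\ell_2^*/s = 1/s$). Concretely, when $a_2 = 0$ the bookkeeping of which jobs move between the SE and the optimum gives $\ell_1 \le c_1 + \ell_1^*$ together with $s\,c_1 \le \ell_2^*$; when $a_2 > 0$ the factor $1/s$ is extracted from the coalition alternatives \eqref{eq.ab1}--\eqref{eq.ab2} and \eqref{eq.all1}--\eqref{eq.all2}, not from a single player's deviation. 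If you want to salvage your plan, replace the target $L \le L_2 + OPT/s$ by $L \le O_1 + O_2/s$ and charge the excess of $L$ over $O_1$ to $O_2/s$ via those pairwise and grand-coalition conditions.
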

By a more detailed an involved analysis, we prove further \emph{tight} bounds on $SPoA$. 
\begin{theorem}\label{thm.UB}
	$SPoA= \hat{\ell}_1$, where (see the blue line in Figure~\ref{fig:bounds})
	\[
	\hat{\ell}_1 = \begin{cases}
	\UBone{} \,, & 1 \:\,\le s \le s_1 \approx 1.325 \\
	\UBtwo{} \,, & s_1 \le s \le s_2 = \frac{1+\sqrt{5}}{2} \approx 1.618 \\
	\UBthree{} \,, & s_2 \le s \le s_3 \approx 1.755 \\
	\UBfour{} \,, & s_3 \le s \le s_4 \approx 1.907 \\
	\UBfive{} \,, & s_4 \le s \le s_5 = 2 \\
	\UBsix{} \,, & s_5 \le s \le s_6 \approx 2.154 \\
	\UBseven{} \,, & s_6 \le s \le s_7 \approx 2.247 \\
	\UBeight{} \,, & s_7 \le s \,.
	\end{cases}
	\]
\end{theorem}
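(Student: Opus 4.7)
The plan is to prove matching upper and lower bounds \(SPoA \le \hat{\ell}_1\) and \(SPoA \ge \hat{\ell}_1\) separately for each of the eight intervals of \(s\).

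For the upper bound I would start by classifying the jobs of an arbitrary SE according to four types determined by the current machine and the favorite machine, and then within each type isolate the largest job from the rest, yielding the eight aggregate size variables \(\xa,\xaa,\xb,\xbb,\xc,\xcc,\xd,\xdd\). This separation is natural because for two machines the only \emph{group} deviations worth considering are swaps (a subset of jobs on machine~1 trades places with a subset on machine~2), and the hardest swaps to forbid in a strong equilibrium are those involving the largest job on each side, which must therefore be tracked individually. Under this partition the two loads become the linear forms \(\xlone\) and \(\xltwo\) (after reweighting non-favorite contributions by \(s\)). I would then record as linear inequalities in these variables the Nash conditions for each type of job, the no-improving-swap conditions for all candidate group deviations built from the isolated largest jobs, and the normalization \(OPT \le 1\). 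The worst-case \(SPoA(s)\) is then the optimum of a parametric LP in \(s\).

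Solving this LP gives a piecewise-rational function of \(s\): the eight linear pieces correspond to eight different subsets of constraints being simultaneously tight, with transitions precisely at the seven thresholds \(s_1,\dots,s_7\). I would match each closed form \(\UBone,\ldots,\UBseven,\UBeight\) with the LP value under one particular tight set, and verify continuity at every \(s_i\) (for instance \(\UBtwo\) and \(\UBthree\) agree at \(s_2=(1+\sqrt{5})/2\), where \(s^2-s-1=0\)). For the matching lower bound I would exhibit, in each of the eight intervals, an explicit family of instances parameterized by \(s\): the LP solution already prescribes the types, sizes and favorite machines of the jobs that realize the bound. I would then check that the proposed allocation is indeed a strong equilibrium (no single player and no subset of players can jointly improve by swapping) and compute that its makespan-to-optimum ratio equals \(\hat{\ell}_1\).

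The main obstacle is the structural characterization used to set up the LP in the first place, namely proving that it suffices to consider strong equilibria of a restricted combinatorial type — in particular, that only swaps involving at most the largest job of each of the four types need to be ruled out. This reduction is what collapses the infinite-dimensional search over SE profiles into the eight-variable LP above; once it is in place, the remaining multi-case analysis, though lengthy, is mechanical, since for each of the eight ranges one identifies the tight constraint set and then translates it back into an explicit instance matching that tight set.
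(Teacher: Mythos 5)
Your overall architecture --- a small set of necessary linear inequalities, a parametric optimization over them producing eight regimes in $s$, and matching explicit instances in each regime --- is indeed the shape of the paper's argument. But the decomposition you use to set up the LP is not the right one, and since this is exactly the step you yourself flag as the ``main obstacle,'' the gap is real rather than cosmetic. You split each of the four (current machine, favorite machine) types into its largest job versus the rest, on the grounds that the hardest swaps to forbid involve the largest job on each side. The paper instead fixes an optimal assignment and splits each type according to whether its jobs are allocated \emph{differently in the optimum}: with $\ell_1=\xlone$ and $\ell_2=\xltwo$, the subsets $\xaa,\xbb,\xc,\xd$ are precisely the jobs that move, so the optimal loads become the explicit linear forms $\ell_1^*=\xa+\xbb/s+\xcc+s\cdot\xd$ and $\ell_2^*=\xaa/s+\xb+s\cdot\xc+\xdd$, and the normalization $opt=1$ turns into the two linear constraints \eqref{eq.ll1}--\eqref{eq.ll2}. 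Under your largest-job split the optimum is a minimum over assignments and is not a linear function of your eight variables, so you cannot record $OPT\le 1$ as an LP constraint at all; the parametric LP you want to solve never gets off the ground.

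A second, related issue: your proposed structural reduction (``only swaps involving at most the largest job of each type need to be ruled out'') is neither proved nor actually needed. The paper makes no completeness claim about which coalitions suffice; it only writes down a hand-picked list of \emph{necessary} SE conditions --- single-job moves in aggregate form, plus specific coalitional swaps of the move-in-the-optimum subsets, such as the $\xaa$-$\xbb$ swap or the $\{\xaa,\xcc\}$-$\{\xb,\xbb,\xd,\xdd\}$ swap, each contributing a disjunction of two inequalities --- and then exhibits, case by case on which of $\xa,\xaa,\xc,\xcc$ vanish, nonnegative weighted combinations of these conditions that already force $\ell_1\le\hat{\ell}_1$ (Table~\ref{tab.UB}). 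Your lower-bound plan (explicit per-interval instances verified to be strong equilibria with ratio $\hat{\ell}_1$) does match the paper's Table~\ref{tab.LU} and would go through, but only after the eight variables are reinterpreted as the comparison-to-optimum decomposition rather than the largest-job one.
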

At last we give \emph{exact} bounds on the $PoA$, which show that the bound in \eqref{eq:bad:easy} based on  Example~\ref{ex:bad-Nash} is never the worst case. 
\begin{theorem}\label{thm.poa}
	$PoA= \PoA{}$ (see the orange line in Figure~\ref{fig:bounds}).
\end{theorem}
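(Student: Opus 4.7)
The plan is to prove $PoA \geq \PoA{}$ and $PoA \leq \PoA{}$ separately.

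For the \emph{lower bound}, I would exhibit a four-job instance that achieves the ratio. Setting $\Delta := s(s+1)/(s^2+s+1)$, take four jobs: $A$ of size $\Delta/s$ with favorite machine $1$; $B$ of size $\Delta$ with favorite $2$; $C$ of size $1-\Delta$ with favorite $2$; $D$ of size $1-\Delta/s$ with favorite $1$. In the candidate NE, $A,B$ sit on machine $1$ and $C,D$ on machine $2$. A direct computation gives $L_1 = \Delta/s + s\Delta = \Delta(s^2+1)/s = \PoA{}$ and $L_2 = L_1 - \Delta$; both jobs on machine $1$ are exactly indifferent to migrating (each deviation cost equals $L_2 + \Delta = L_1$), and jobs on machine $2$ cannot strictly improve because their deviation cost exceeds $L_1 \geq L_2$, so the configuration is a NE. Assigning every job to its favorite gives both machines load $1$, so $\mathrm{OPT} = 1$ and the ratio is exactly $\PoA{}$.

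For the \emph{upper bound}, I would normalize $\mathrm{OPT} = 1$, take WLOG $L_1 \geq L_2$, and set $\Delta := L_1 - L_2$. The NE condition on machine $1$ yields the crucial per-job inequalities: every favorite-$1$ job there has size $\geq \Delta/s$, and every favorite-$2$ job has size $\geq \Delta$. Classify each job on NE machine $1$ by its (favorite, OPT-machine) pair and let $\alpha_1, \alpha_2, \beta_1, \beta_2$ be the total sizes in the four categories (favorite $1$, OPT $1$), (favorite $1$, OPT $2$), (favorite $2$, OPT $1$), (favorite $2$, OPT $2$); introduce analogous variables $e,f,g,h$ for jobs on NE machine $2$. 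The constraints $M_1^*, M_2^* \leq 1$, combined with the elementary estimate $se+g \leq s(e+sg)$ for $s \geq 1$, yield an upper bound on $L_2$ and hence the master inequality
\[
\Delta \;\geq\; (s+1)\bigl(\alpha_1 + s\beta_1 + \alpha_2 + \beta_2 - 1\bigr).
\]

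In the principal case $\alpha_2 = \beta_1 = 0$, substituting the NE bounds $\Delta \leq s\alpha_1$ and $\Delta \leq \beta_2$ into the master inequality gives the two linear constraints $(s+1)\alpha_1 + s\beta_2 \leq s+1$ and $\alpha_1 + (s+1)\beta_2 \leq s+1$. Maximizing $L_1 = \alpha_1 + s\beta_2$ over this two-dimensional polytope, the optimum is attained at the intersection of the two hyperplanes, namely $\alpha_1 = (s+1)/(s^2+s+1)$ and $\beta_2 = s(s+1)/(s^2+s+1)$, where the objective equals $(s+1)(s^2+1)/(s^2+s+1) = \PoA{}$, matching the lower bound.

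The main obstacle is ruling out the remaining configurations. Using the identity $\PoA{} = s + 1/(s^2+s+1)$: when $\alpha_1 = \beta_2 = 0$ the $M_2^*$ constraint alone yields $L_1 \leq s \leq \PoA{}$; when $\beta_2 = 0$ but $\alpha_2 > 0$, combining the NE bounds $\Delta \leq s\alpha_1,\, s\alpha_2$ with the master inequality gives $L_1 \leq 2(s+1)/(s+2) \leq \PoA{}$ (the last inequality reduces to $s^2 \geq 1$); the symmetric case $\alpha_1 = 0$ is analogous; and configurations with $\beta_1 > 0$ introduce the extra NE bound $\Delta \leq \beta_1$ which only tightens the LP further. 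Each sub-case is a two- or three-variable LP whose vertex analysis is elementary, but the bookkeeping across the subsets of positive $\alpha_i, \beta_j$ requires care to ensure no case is missed.
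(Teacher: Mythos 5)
Your overall strategy --- normalize $\mathrm{OPT}=1$, classify the jobs on the heavier machine by their (favorite machine, optimal machine) pair, combine the per-job Nash constraints with the two optimality constraints into small linear programs, and match with an explicit instance --- is the same as the paper's: your $\alpha_1,\alpha_2,\beta_1,\beta_2$ are exactly the paper's $c_2$, $c_1$, $a_1/s$, $a_2/s$, and your case analysis mirrors Lemmas~\ref{cla.1}--\ref{cla.4}. Your lower-bound instance is correct and is, up to relabeling, the instance of Lemma~\ref{cla.4}; your master inequality is a valid aggregation of $M_1^*\le 1$ and $M_2^*\le 1$ with weights $s$ and $1$; and the principal case is handled correctly (indeed, replacing $\alpha_1$ by $T=\alpha_1+\alpha_2+s\beta_1$ shows that the same two-variable LP covers \emph{every} configuration with $\alpha_1>0$ and $\beta_2>0$, which is precisely the paper's Lemma~\ref{cla.3}).

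The gap is in the residual cases. First, for $\alpha_1=\beta_2=0$ with $\alpha_2,\beta_1>0$, the claim that ``the $M_2^*$ constraint alone yields $L_1\le s$'' is not correct: $M_2^*$ controls only $\alpha_2$ (giving $\alpha_2\le 1/s$), while $\beta_1$ lives in $M_1^*$, so the optimality constraints alone give only $L_1=\alpha_2+s\beta_1\le 1+1/s$, which exceeds $\PoA{}$ for $s$ near $1$ (at $s=1$ this is $2$ versus $4/3$). This case genuinely requires the Nash bounds $\Delta\le s\alpha_2$ and $\Delta\le\beta_1$ together with the master inequality; the resulting LP optimum is $\frac{s^3+s^2+s+1}{s^3+s^2+1}$ (the bound in row \ref{lem.a1-1}.2 of Table~\ref{tab.UB}), which is at most $\PoA{}$ but is \emph{not} at most $s$. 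Second, the assertion that configurations with $\beta_1>0$ ``only tighten the LP further'' is not a proof: $\beta_1$ enters the objective $L_1$ with a positive coefficient $s$, so adding it can in principle raise the LP value, and those LPs must actually be solved (as the paper does row by row in Table~\ref{tab.UB}). Third, the enumeration as written does not visibly cover all sign patterns (e.g.\ $\alpha_1,\alpha_2,\beta_2>0$ with $\beta_1=0$ falls under none of your listed cases unless the reduction to $T$ above is made explicit). All of this is repairable with the machinery you already have, but as written the upper bound is not established.
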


These bounds express the dependency on the parameter $s$ and suggest a natural comparison with the case of two related machines (with the same $s$). 

\subsection{Related work}
The bad Nash equilibrium in Example~\ref{ex:bad-Nash} appears in several works \cite{andelman2009strong,originalSPOA,ChuKatPhrRot08,Gies17} to show that even for two machines the price of anarchy is unbounded, thus suggesting that the notion should be refined. Among these, the \emph{strong price of anarchy}, which considers \emph{strong} NE, is studied in \cite{andelman2009strong,fiat2007strong,Epstein2010}. The \emph{sequential price of anarchy}, which considers \emph{sequential} equilibria arising in extensive form games, is studied in \cite{originalSPOA,bilo,SPOAcongestion,Gies17}. In \cite{ChuKatPhrRot08} the authors investigate \emph{stochastically stable} equilibria and the resulting \emph{price of stochastic anarchy}, while \cite{KlePilTar09} focuses on the equilibria produced by the \emph{multiplicative weights update algorithm}. 
A further distinction  is between \emph{mixed} (randomized) and \emph{pure} (deterministic) equilibria: in the former, players choose a probability distribution over the strategies and regard their expected cost, in the latter they choose deterministically one strategy. In this work we focus on pure equilibria and in the remaining of this section we write \emph{mixed} $PoA$ to denote the bounds on the price of anarchy for mixed equilibria.

The following bounds have been obtained for scheduling games:

\begin{itemize}
\item \emph{Unrelated machines.} 
The $PoA$ is \emph{unbounded} even for \emph{two} machines, while the $SPoA$ is exactly $m$, for any number $m$ of machines \cite{fiat2007strong,andelman2009strong}. 

	\item \emph{Related machines.} 
		The price of anarchy is \emph{bounded} for \emph{constant number of machines}, and grows otherwise. Specifically, \emph{mixed} $PoA = \Theta(\frac{\log m}{\log\log\log m}) $ and $PoA =\Theta(\frac{\log m}{\log\log m})$ \cite{czumaj2007tight}, while  $SPoA = \Theta (\frac{\log m}{(\log\log m)^2})$ \cite{fiat2007strong}. The case of a \emph{small number of machines} is of particular interest. For \emph{two} and \emph{three} machines,  $PoA=\frac{\sqrt{5}+1}{2}$ and $PoA=2$ 
	 \cite{feldmann2003nashification}, respectively. For \emph{two} machines, exact bounds  as a function of the \emph{speed ratio} $s$ on both $PoA$ and $SPoA$ are given in \cite{Epstein2010}. 
	
	\item \emph{Restricted assignment.} The price of anarchy bounds are similar to related machines: \emph{mixed} $PoA = \Theta(\frac{\log m}{\log\log\log m}) $ and $PoA =\Theta(\frac{\log m}{\log\log m})$  \cite{AWERBUCH2006200}, where the analysis of $PoA$ is also in \cite{gairing2006price}. 
	\item \emph{Identical machines.} The price of anarchy and the strong price of anarchy for pure equilibria are \emph{identical} and bounded by a \emph{constant}: $PoA=SPoA=\frac{2m}{m+1}$ \cite{andelman2009strong}, where the upper and lower bounds on $PoA$ can be deduced from \cite{finn1979linear} and \cite{schuurman2007performance}, respectively. Finally, \emph{mixed} $PoA =\Theta(\frac{\log m}{\log\log m})$   \cite{koutsoupias1999worst,czumaj2007tight,koutsoupias2003approximate}.
	
\end{itemize}
For further results on other problems and variants of these equilibrium concepts we refer the reader to e.g.  \cite{roughgarden2009intrinsic,christodoulou2005price,PoSoriginal,Chien2009} and references therein.

\section{Preliminaries} 
\label{sec:setting}

\subsection{Model (favorite machines) and basic definitions}

In unrelated machine scheduling, there are $m$ machines and $n$ jobs. Each job $j$ has some processing time $p_{ij}$ on machine $i$. A schedule is an assignment of each job to some machine. The \emph{load} of a machine $i$ is the sum of the processing times of the jobs assigned to machine $i$. The \emph{makespan} is the maximum load over all machines. In this work, we consider the restriction of jobs with \emph{favorite} machines: Each job $j$ consists of a pair $(s_j,f_j)$, where $s_j$ is the \emph{size} of job $j$ and $f_j$ is the \emph{favorite machine} of this job. For a \emph{common parameter} $s\geq 1$, the processing time of a job in a favorite machine is just its size ($p_{ij}=s_j$ if $i=f_j$), while on non-favorite machines is it \emph{$s$ times slower} ($p_{ij}=s \cdot s_j$ if $i\neq f_j$).

We consider jobs as \emph{players} whose cost is the load of the machine they choose: For an allocation $x=(x_1,\ldots,x_n)$, where $x_j$ denotes the machine chosen by job $j$, and $\ell_i(x) = \sum_{j:x_j=i} p_{ij}$ is the load of machine $i$. We say that $x$ is a \emph{Nash equilibrium (NE)} if no player $j$ can unilaterally deviate and improve her own cost, i.e., move to a machine $\hat x_j$ such that $\ell_{\hat x_j}(\hat x)<\ell_{x_j}(x)$ where $\hat x = (x_1,\ldots, \hat x_j,\ldots, x_n)$ is the allocation resulting from $j$'s move. In a \emph{strong Nash equilibrium (SE)},
we require that in any group of deviating players, at least one of them does not improve: allocation $x$ is a SE if, for any $\hat x$ which differ in exactly a subset $J$ of players, there is one $j\in J$ such that  $\ell_{\hat x_j}(\hat x)\geq \ell_{x_j}(x)$. 
The \emph{price of anarchy (PoA)} is the worst-case ratio between the cost (i.e., makespan) of a NE and the optimum: $PoA = \max_{x\in \mathsf{NE}}\frac{C(x)}{opt}$ where $C(x)=\max_i \ell_i(x)$ and $\mathsf{NE}$ is the set of pure Nash equilibria. The \emph{strong price of anarchy (SPoA)} is defined analogously w.r.t. the set $\mathsf{SE}$ of strong Nash equilibria:  $SPoA = \max_{x\in \mathsf{SE}}\frac{C(x)}{opt}$. 

\subsection{First step of the analysis (reducing to eight groups of jobs)} To bound the strong price of anarchy we have to compare the \emph{worst SE} with the optimum.
The analysis consists of two main parts. We first 
consider the subset of jobs that needs to be reallocated in any allocation (or any SE) in order to obtain the optimum. It turns out that there are eight such subsets of jobs, and essentially the analysis reduced to the cases of eight jobs only. We then exploit the condition that possible reshuffling of these eight subsets must guarantee in order to be a SE.
\begin{figure}[b]
	\centering
	\includegraphics[width=0.8\textwidth]{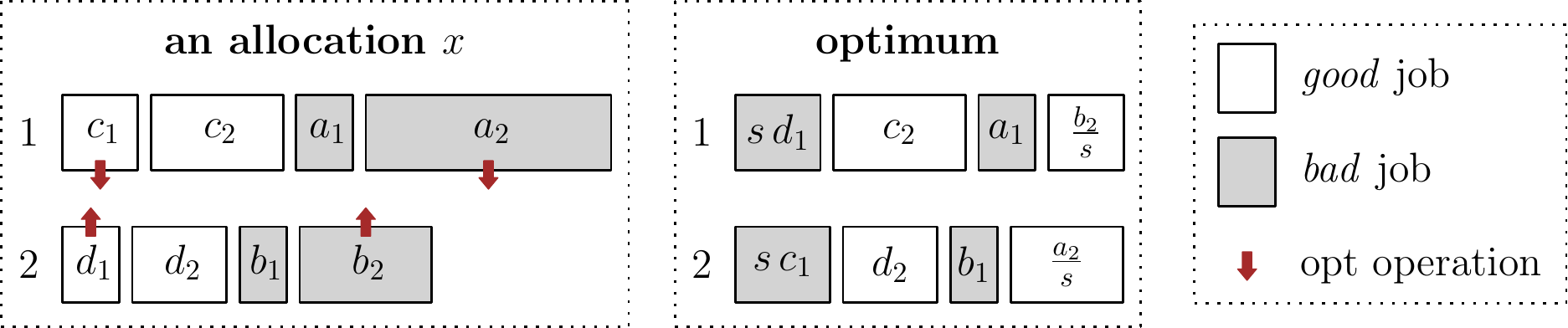}
	\caption{Comparing SE with the optimum.}
	\label{fig:setting}
\end{figure}

We say that a job is \emph{good}, for an allocation under consideration, if it is allocated to its \emph{favorite} machine. Otherwise the job is \emph{bad}. Consider an allocation $x$ and the optimum, respectively.
As shown in  Figure~\ref{fig:setting},
let 
\[
	\ell_1 = \xa{} + \xaa{} + \xc{} + \xcc{} \quad \text{and} \quad \ell_2 = \xb{} + \xbb{} + \xd{} + \xdd{}
\] 
be the load of machine 1 and machine 2 in allocation $x$,
where $\xa{}+ \xaa{}$ and $\xb{}+ \xbb{}$ are load of \emph{bad} jobs and $\xc{}+ \xcc{}$ and $\xd{}+ \xdd{}$ are load of \emph{good} jobs. Note that these quantities correspond to (possibly empty) subsets of jobs.
Without loss of generality suppose $\ell_1 \ge \ell_2$, that is
\[
C(x) = \max(\ell_1,\ell_2) = \ell_1 \ .
\]
In the optimum, some of the jobs will be processed by different machine as in allocation $x$. 
Suppose the jobs associated with $\xaa{},\xbb{},\xc{}$ and $\xd{}$ are the difference.
Thus, the load of the two machines in the optimum are 
\[
	\ell_1^* = \xa{}+ {\xbb{}}/{s} + \xcc{} + s\cdot \xd{} \quad \text{and} \quad
	 \ell_2^* = {\xaa{}}/{s} + \xb{} + s\cdot \xc{} + \xdd{} \,.
\]

\begin{remark}
 This setting can be also used to analyze $PoA$ and $SPoA$ for \emph{two related machines} \cite{Epstein2010}, which corresponds to two special cases $\xa{},\xaa{},\xd{},\xdd{} = 0$ and $\xb{},\xbb{},\xc{},\xcc{} = 0$.
\end{remark}

\subsection{Conditions for SE} 
\label{sec:analysis}
Without loss of generality we suppose $opt = 1$, i.e.,
\begin{gather}
	\label{eq.ll1}
	\ell_1^* \le 1 \, , \\
	\label{eq.ll2}
	\ell_2^* \le 1 \, .
\end{gather}
Since allocation $x$ is SE, we have that the minimum load in allocation $x$ must be at most $opt$, 
\begin{equation}\label{eq.l2}
	\ell_2 \le 1 \ ,
\end{equation}
because otherwise we could swap some of the jobs to obtain $opt$ and in this will improve the cost of all jobs.

Next, we shall provide several necessary conditions for allocation $x$ to be a SE.
Though these conditions are only a subset of those that SE must satisfy, they will lead to tight bounds on the $SPoA$.

\begin{enumerate}
	\item No job in machine 1 will go to machine 2:
		\begin{align}
			\label{eq.a1}
			& \ell_1 \le \ell_2 + {\xa{}}/{s}\,, && \hspace{-4cm} \text{for $\xa{}>0$}\,;\\
			\label{eq.a2}
			& \ell_1 \le \ell_2 + {\xaa{}}/{s}\,, && \hspace{-4cm} \text{for $\xaa{}>0$} \,; \\
			\label{eq.c1}
			& \ell_1 \le \ell_2 + s \cdot \xc{}\,, && \hspace{-4cm} \text{for $\xc{}>0$} \, ;\\
			\label{eq.c2}
			& \ell_1 \le \ell_2 + s \cdot \xcc{}\,, && \hspace{-4cm} \text{for $\xcc{}>0$} \, . 
	\intertext{\item No $\xaa{} \, \text{-} \, \xbb{}$ swap:}
			\label{eq.ab1}
			& \ell_1 \le \ell_2 -\xbb{} + {\xaa{}}/{s} \,,\\
			\label{eq.ab2}
			\text{or} \quad & \ell_2 \le \ell_1 -\xaa{} + {\xbb{}}/{s} \,.
	\intertext{\item No $\xaa{} \, \text{-} \, \xd{}$ swap:}
			\label{eq.ad1}
			& \ell_1 \le \ell_2 - \xd{} + {\xaa{}}/{s} \,,\\
			\label{eq.ad2}
			\text{or} \quad & \ell_2 \le \ell_1 -\xaa{} + s \cdot \xd{} \,.
	\intertext{\item No $\xaa{} \, \text{-} \, \{\xb{},\xdd{}\}$ swap:}
			\label{eq.abd1}
			& \ell_1 \le \ell_2 -\xb{} -\xdd{} + {\xaa{}}/{s}\,,\\
			\label{eq.abd2}
			\text{or} \quad & \ell_2 \le \ell_1 - \xaa{} + {\xb{}}/{s} + s \cdot \xdd{} \,.
		\intertext{\item No $\{\xaa{},\xcc{}\} \, \text{-} \, \{\xb{},\xbb{},\xd{},\xdd{}\}$ swap:}
			\label{eq.acbd1}
			& \ell_1 \le {\xaa{}}/{s} + \xcc{} \cdot s\,,\\
			\label{eq.acbd2}
			\text{or} \quad & \ell_2 \le \xa{} + \xc{} + (\xb{}+\xbb{})/{s}  + s (\xd{}+\xdd{}) \,.
	\intertext{\item No $\{\xa{},\xaa{},\xc{},\xcc{}\} \, \text{-} \, \{\xb{},\xbb{},\xd{},\xdd{}\}$ swap:}
			\label{eq.all1}
			& \ell_1 \le (\xa{}+\xaa{})/{s}  + s (\xc{}+\xcc{}) \,,\\
			\label{eq.all2}
			\text{or} \quad & \ell_2 \le (\xb{}+\xbb{})/{s}  + s (\xd{}+\xdd{})  \,.
		\end{align}
\end{enumerate}

\section{Strong Price of Anarchy} \label{sec:spoa}
We first prove a simpler  general upper bound (Theorem~\ref{thm.UB.simple}) and then refine the result by giving tight bounds for all possible values of $s$ (Theorem~\ref{thm.UB}). The first result says that the strong price of anarchy is bounded and actually gets better as $s$ increases.

\begin{proof}[Proof of Theorem \ref{thm.UB.simple}]
	We distinguish the following cases:
	\begin{description}
		\item[($a_2=0$.)] 
		By definition of $\ell_1$ and $\ell_1^*$ we have $\ell_1 \leq c_1 + \ell^*_1$, 
		and also by definition of $\ell_2^*$ we have $\ell_2^* \geq s \cdot c_1$.
		Therefore, along with \eqref{eq.ll1} and \eqref{eq.ll2} it holds that
		$\ell_1 \leq \frac{\ell^*_2}{s} + \ell^*_1 \leq \frac{1}{s} + 1$.
		
		\item[($a_2>0$.)] 
		In this case we use that at least one between \eqref{eq.ab1} or \eqref{eq.ab2} must hold. If \eqref{eq.ab1} holds then: 
		\begin{equation*}
		\ell_1 \le \ell_2 - b_2 + \frac{a_2}{s} \le d_1 + \ell_2^* \leq \frac{\ell^*_1}{s} + \ell^*_2 \leq \frac{1}{s}+ 1 
		\end{equation*}
		where the last two inequalities follow by the fact that $\ell_1^* \geq s \cdot d_1$ and \eqref{eq.ll1}-\eqref{eq.ll2}.
			
		If \eqref{eq.ab2} holds then we use that at least one between \eqref{eq.all1} or \eqref{eq.all2} must hold. If \eqref{eq.all1} holds then by definition of $\ell_1$ this can be rewritten as
		\[
		\frac{a_1+a_2}{s} \leq c_1 + c_2 \, .
		\]
		By adding $a_1+a_2$ on both sides, this is the same as
		\begin{equation*}
			a_1+a_2 + \frac{a_1+a_2}{s} \leq \underbrace{a_1 + a_2 + c_1 + c_2}_{\ell_1} \quad
			\Rightarrow \quad a_2 \leq  \frac{s}{s+1} \ell_1 \,.
		\end{equation*}
		By \eqref{eq.l2} and \eqref{eq.a2} we also have
		\[
		\ell_1 \leq 1 + \frac{a_2}{s}
		\]
		and by putting the last two inequalities together we obtain 
		\[
		\ell_1 \leq 1 + \frac{\ell_1}{s+1} \quad\Leftrightarrow \quad \ell_1 \leq  1 + \frac{1}{s} \, .
		\]
		If \eqref{eq.all2} holds then we first observe that, by definition, the following identity holds:
		\[
		\ell_1 = \ell^*_1 + \frac{\ell^*_2}{s} + \frac{s^2-1}{s^2}a_2 - \frac{b_1+b_2}{s} - sd_1 - \frac{d_2}{s} \, .
		\]
		We shall prove that 
		\begin{equation}\label{eq:rhs-negative}
		\frac{s^2-1}{s^2}a_2 - \frac{b_1+b_2}{s} - sd_1 - \frac{d_2}{s} \leq 0
		\end{equation}
		and thus conclude from \eqref{eq.ll1}-\eqref{eq.ll2} that
		\[
		\ell_1 \leq \ell^*_1 + \frac{\ell^*_2}{s} \leq 1 + \frac{1}{s} \, .
		\]
		From \eqref{eq.a2} and \eqref{eq.ab2} we have
		\[
		a_2 \leq \frac{b_2}{s-1} \leq \frac{b_1+b_2}{s-1}
		\]
		and plugging into left hand side of \eqref{eq:rhs-negative} we get 
		\[
		\frac{s^2-1}{s^2}a_2 - \frac{b_1+b_2}{s} - sd_1 - \frac{d_2}{s} \leq \frac{b_1+b_2}{s^2} - sd_1 - \frac{d_2}{s} \,.
		\]
		Finally, by definition of $\ell_2$, \eqref{eq.all2} can be rewritten as 
		\[
		\frac{b_1 + b_2}{s^2} \leq \frac{d_1 + d_2}{s}
		\]
		which implies 
		\[
		\frac{b_1 + b_2}{s^2} - sd_1 - \frac{d_2}{s} \leq \frac{d_1+d_2}{s} - s d_1 - \frac{d_2}{s} \leq 0
		\]
		which proves \eqref{eq:rhs-negative} and concludes the proof of this last case. 
	\end{description}
\end{proof}

\subsection{Notation used for the improved upper bound}
We shall break the proof into several subcases. First, we consider different intervals for $s$.
Then, for each interval, we consider the quantities $\xa{},\xaa{},\xc{},\xcc{}$ and break the proof into subcases, according to the fact that some of these quantities are  zero or strictly positive (Lemmas~\ref{lem.a1-1}-\ref{lem.c1-0c2-1} below). Finally, in each subcase, use a subset of the SE constraints to obtain the desired bound. 

Table~\ref{tab.UB} shows the subcases and which constraints are used to prove a corresponding bound. 
Note that for the chosen constraints, we also specify some \emph{weight} which essentially says how these constraints are combined together in the actual proof. 
We explain this with the following example. 

\subsubsection{An illustrative example (weighted combination of constraints)}\label{subsub.illustrative example}
Consider the case 1-2 in Table~\ref{tab.UB} (second row). In the third column, the four numbers show whether the four variable $\xa{},\xaa{},\xc{},\xcc{}$ are zero or non-zero, where ``$1$'' represents non-zero and ``$*$'' represents non-negative. The last column is the bound we obtain for $\ell_1$ and thus for the $SPoA$. Specifically, in this case  we want to prove the following:
\begin{claim}
	If $\xa{}>0$, $\xaa=0$, and $\xc{}>0$, then $\ell_1 \le \frac{s^3+s^2+s+1}{s^3+s^2+1}$.
\end{claim}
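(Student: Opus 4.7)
The plan is to derive the bound by taking a positive linear combination of four of the necessary inequalities collected in Section~\ref{sec:analysis}. Because $a_1>0$ and $c_1>0$, inequalities \eqref{eq.a1} and \eqref{eq.c1} are both active; combining these with the two optimum bounds \eqref{eq.ll1} and \eqref{eq.ll2} should be enough, and in particular neither condition 5 nor condition 6 is needed in this subcase. Setting $D = s^3+s^2+1$ and $N = s^3+s^2+s+1$, the target is $\ell_1 \le N/D$.

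Concretely, I would weight \eqref{eq.ll1} by $\alpha = s(s^2+1)/D$, \eqref{eq.ll2} by $\beta = (s^2+1)/D$, \eqref{eq.a1} by $\gamma = s^2/D$, and \eqref{eq.c1} by $\delta = 1/D$ (all strictly positive for $s\ge 1$). After writing each inequality in the form ``LHS $\le$ RHS'' and summing, I expect the coefficient of $a_1$ to collapse to $1$ via the identity $s(s^2+1)+s(s-1)+1 = D$, and the coefficient of $c_1$ likewise to collapse to $1$ via $s(s^2+1)+s^2-(s-1) = D$. The three coefficients of $b_1$, $b_2$, $d_2$ should cancel exactly; the coefficient of $d_1$ will equal $(s^4-1)/D\ge 0$; and the coefficient of $c_2$ will come out to $(s+1)(s^2+1)/D = N/D$. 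The combined right-hand side is $\alpha+\beta = (s+1)(s^2+1)/D = N/D$. Rearranging yields $a_1+c_1+(N/D)\,c_2 \le N/D$, and since $N\ge D$ (so $1-N/D\le 0$), this implies $\ell_1 = a_1+c_1+c_2 \le N/D + c_2(1-N/D) \le N/D$.

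The main obstacle is guessing the weights $(\alpha,\beta,\gamma,\delta)$ in the first place. The natural route is to first construct the candidate extremal allocation that should attain the bound with equality, namely $a_1 = s^2(s+1)/D$, $c_1 = (s+1)/D$, $b_1 = s^2(s-1)/D$, $b_2 = s/D$, $d_2 = (s^2-s+1)/D$, and $c_2 = d_1 = 0$. Direct substitution shows that \eqref{eq.ll1}, \eqref{eq.ll2}, \eqref{eq.a1} and \eqref{eq.c1} all hold with equality on this instance, while $d_1 = 0$ is consistent with the slack in the $d_1$-coefficient above. The LP-dual multipliers at this vertex are then exactly the weights $(\alpha,\beta,\gamma,\delta)$ listed above, and the rest of the argument is a mechanical check of a handful of polynomial identities in $s$ plus the sign condition $s^4\ge 1$.
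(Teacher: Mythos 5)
Your proposal is correct and coincides with the paper's own proof: the paper's illustrative example uses exactly the same four constraints \eqref{eq.ll1}, \eqref{eq.ll2}, \eqref{eq.a1}, \eqref{eq.c1} with exactly your weights $\frac{s^3+s}{s^3+s^2+1},\frac{s^2+1}{s^3+s^2+1},\frac{s^2}{s^3+s^2+1},\frac{1}{s^3+s^2+1}$, arriving at the same combined inequality with coefficient $1$ on $\xa{}$ and $\xc{}$, coefficient $\frac{s^3+s^2+s+1}{s^3+s^2+1}\ge 1$ on $\xcc{}$, and nonnegative coefficient $\frac{s^4-1}{s^3+s^2+1}$ on $\xd{}$. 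Your additional construction of the tight instance explaining where the dual multipliers come from is a nice touch but not part of the paper's argument for this claim.
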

\begin{proof}
	First summing all the constraints with the corresponding weights given in columns 4 and 5 of Table~\ref{tab.UB} (second row):
	\[
	\textstyle
	 \eqref{eq.ll1}  \frac{s^3+s}{s^3+s^2+1} + \eqref{eq.ll2} \frac{s^2+1}{s^3+s^2+1} + \eqref{eq.a1} \frac{s^2}{s^3+s^2+1} + \eqref{eq.c1}  \frac{1}{s^3+s^2+1} \, .
	\]
	This simplifies  as
	\begin{equation}\label{eq.ill}
		\textstyle
	 \xa{} + \frac{s^3+s^2+s+1}{s^4+s^3+s}\xaa{} + \xc{} + \frac{s^3+s^2+s+1}{s^3+s^2+1}\xcc{} + \frac{s^4-1}{s^3+s^2+1}\xd{} \le \frac{s^3+s^2+s+1}{s^3+s^2+1}\, .
	\end{equation}
	Note that all the weights (column 5) are positive when $s$ in the given interval (column 2), so that the direction of the inequalities (column 4) remains.
	
	According to columns 2 and 3, we have $\xaa{} = 0$, $\frac{s^3+s^2+s+1}{s^3+s^2+1} \ge 1$ and $\frac{s^4-1}{s^3+s^2+1} \ge 0$. This and \eqref{eq.ill} imply 
	\begin{equation*}
	\ell_1 = \xa{} + \xaa{} + \xc{} + \xcc{} \le \frac{s^3+s^2+s+1}{s^3+s^2+1} \,.
	\end{equation*}
	We therefore get the bound in column 6.
\end{proof}
\subsection{The actual proof}
We break the proof of Theorem~\ref{thm.UB} into several lemmas, and prove the upper bound in each of them. 
The lemmas are organized depending on the value of $\xa{},\xaa{},\xc{},\xcc{}$.
Finally, we show that the bounds of these lemmas are tight (Lemma~\ref{lem.LB}).  

\begin{lemma}\label{lem.a1-1}
	If $\xa{} > 0$, then $\ell_1 \le \hat{\ell}_1$. 
\end{lemma}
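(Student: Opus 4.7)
The plan is to mimic the weighted-combination technique illustrated in Section~\ref{subsub.illustrative example} and the layout of Table~\ref{tab.UB}. Since $\xa{} > 0$, the SE constraint~\eqref{eq.a1} is always active, giving the basic inequality $\ell_1 \le \ell_2 + \xa{}/s$; this is the anchor of the whole argument.

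First, I would split into eight subcases according to whether each of $\xaa{}$, $\xc{}$, $\xcc{}$ is zero or strictly positive. In each subcase, the ``no-move'' SE constraints~\eqref{eq.a2}, \eqref{eq.c1}, \eqref{eq.c2} become available only when the respective variable is positive; together with~\eqref{eq.a1}, \eqref{eq.ll1}, and~\eqref{eq.ll2}, they form the building blocks of the linear combinations. Where these alone do not suffice, I would bring in one of the disjunctive ``no-swap'' constraints from~\eqref{eq.ab1}--\eqref{eq.all2}, branching on which of the two alternatives holds. Each leaf of this case tree corresponds to one row of Table~\ref{tab.UB} in which $\xa{}>0$.

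Next, for each subcase I would further restrict $s$ to one of the eight subintervals determined by the thresholds $s_1,\ldots,s_7$ in Theorem~\ref{thm.UB}. Within a given subinterval, the target bound $\hat{\ell}_1$ is a fixed rational function of $s$, and I would produce nonnegative weights for the chosen constraints so that their weighted sum simplifies to
\[
\ell_1 \;=\; \xa{} + \xaa{} + \xc{} + \xcc{} \;\le\; \hat{\ell}_1,
\]
after exploiting that variables set to zero drop out and that each remaining variable's coefficient on the left-hand side is at least $1$ (so the inequality still bounds $\ell_1$ itself). This mirrors exactly the derivation of inequality~\eqref{eq.ill} in the illustrative example.

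The main obstacle will not be the algebra of the weighted sum---which is routine once the weights are in hand---but rather (i) picking, for each subcase/subinterval pair, which of the two alternatives in~\eqref{eq.ab1}/\eqref{eq.ab2}, \eqref{eq.ad1}/\eqref{eq.ad2}, \eqref{eq.abd1}/\eqref{eq.abd2}, \eqref{eq.acbd1}/\eqref{eq.acbd2}, \eqref{eq.all1}/\eqref{eq.all2} to use, and (ii) checking that all the weights are nonnegative throughout the given subinterval of $s$, so that inequality directions are preserved and the coefficients of the ``other-side'' variables $\xb{},\xbb{},\xd{},\xdd{}$ in the combined inequality remain nonnegative (so those terms can be discarded). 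Once both conditions are verified on every leaf of the case tree, the per-interval bounds match the piecewise definition of $\hat{\ell}_1$ in Theorem~\ref{thm.UB}, proving the lemma. In practice I would present the proof as a case-by-case verification keyed to the corresponding rows of Table~\ref{tab.UB}, with the sign check for the weights carried out once per interval.
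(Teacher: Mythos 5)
Your plan is essentially the paper's own proof: the paper likewise reduces Lemma~\ref{lem.a1-1} to a case analysis on which of $\xaa{},\xc{}$ vanish (with $\xcc{}$ left arbitrary, so four subcases rather than your eight) and, for each case, cites the nonnegative weighted combinations of constraints recorded in the rows of Table~\ref{tab.UB}, taking the minimum over alternative combinations within a case and the maximum over cases. The only cosmetic differences are that for this particular lemma the paper needs none of the disjunctive swap constraints --- rows \ref{lem.a1-1}\ma{}1--\ref{lem.a1-1}\ma{}4\mi{}b use only \eqref{eq.ll1}, \eqref{eq.ll2}, \eqref{eq.l2}, \eqref{eq.a1}, \eqref{eq.a2}, \eqref{eq.c1} --- and that the $s$-breakpoints are dictated by weight nonnegativity ($\sqrt{2}$ and $\approx 1.272$) rather than by $s_1,\dots,s_7$, the resulting per-case bounds being merely dominated by $\hat{\ell}_1$ rather than equal to it.
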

\begin{lemma}\label{lem.a2-0}
	If $\xa{} = \xaa{} = 0$, then $\ell_1 \le \hat{\ell}_1$.
\end{lemma}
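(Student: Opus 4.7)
The plan is to set $a_1 = a_2 = 0$ throughout, collapsing $\ell_1$ to $c_1 + c_2$, and then to split on whether each of $c_1, c_2$ is zero or strictly positive, dispatching the three degenerate sub-cases immediately and reserving the full weighted-combination machinery for the sub-case $c_1, c_2 > 0$.

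First I would handle the three easy sub-cases. If $c_1 = c_2 = 0$, then $\ell_1 = 0 \le \hat{\ell}_1$. If $c_1 = 0$ and $c_2 > 0$, condition \eqref{eq.ll1} reduces to $\xbb{}/s + \xcc{} + s\xd{} \le 1$, so $c_2 \le 1$; since one checks that $\hat{\ell}_1 \ge 1$ on every interval of $s$, the bound follows. Symmetrically, if $c_1 > 0$ and $c_2 = 0$, condition \eqref{eq.ll2} gives $s c_1 \le 1$, hence $c_1 \le 1/s \le 1 \le \hat{\ell}_1$.

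The core sub-case is $c_1 > 0$ and $c_2 > 0$. Here I would combine, with carefully chosen non-negative weights (rational functions of $s$), the surviving constraints \eqref{eq.ll1}, \eqref{eq.ll2}, \eqref{eq.l2}, \eqref{eq.c1}, \eqref{eq.c2} and, where needed, the disjunction \eqref{eq.acbd1}/\eqref{eq.acbd2}, in the spirit of the illustrative computation of Section~\ref{subsub.illustrative example}. Observe that with $a_1 = a_2 = 0$, condition \eqref{eq.all1} reduces to $\ell_1 \le s\ell_1$ and is automatic, so the disjunction \eqref{eq.all1}/\eqref{eq.all2} yields no extra information, while the swap constraints involving $a_2$ such as \eqref{eq.ab1}-\eqref{eq.abd2} either degenerate to trivialities (their $a_2$ branch vanishes) or follow from $\ell_1 \ge \ell_2$ on the other branch. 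For each of the eight intervals defining $\hat{\ell}_1$ --- and, within each interval, for each branch of the \eqref{eq.acbd1}/\eqref{eq.acbd2} disjunction whenever both branches are needed --- I would exhibit explicit weights so that the weighted sum of the selected constraints delivers $c_1 + c_2 \le \hat{\ell}_1$.

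The hard part will be the bookkeeping in this last sub-case: for every configuration one must verify (i) that the coefficients of $b_1, b_2, d_1, d_2$ in the combined inequality are non-negative so that those variables can be dropped, (ii) that the coefficients of $c_1$ and $c_2$ are at least $1$ so the left-hand side dominates $\ell_1$, (iii) that the weights themselves are non-negative for $s$ in the target interval, and (iv) that the right-hand side evaluates to the promised piece of $\hat{\ell}_1$. The thresholds $s_1, \dots, s_7$ will then emerge, exactly as in the illustrative example, as the roots at which one of these weights vanishes or one family of combinations ceases to be optimal and must be replaced by the next.
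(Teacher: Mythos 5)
Your proposal is correct and follows essentially the same route as the paper: the same split on which of $c_1,c_2$ vanish, the same one-constraint dispatch of the degenerate subcases via \eqref{eq.ll1} and \eqref{eq.ll2}, and the same weighted-combination machinery for the subcase $c_1,c_2>0$. The only (minor) difference is that the paper's Table~\ref{tab.UB} (rows \ref{lem.a2-0}\ma{}3\mi{}a and \ref{lem.a2-0}\ma{}3\mi{}b) shows the main subcase needs neither per-interval weights nor the \eqref{eq.acbd1}/\eqref{eq.acbd2} disjunction: two unconditional combinations of \eqref{eq.ll1}, \eqref{eq.ll2}, \eqref{eq.c1}, \eqref{eq.c2}, valid for all $s\ge 1$, already give $\ell_1\le\min\bigl\{\frac{2(s+1)}{s+2},\frac{s+2}{s+1}\bigr\}$, after which one only checks that this quantity is at most $\hat{\ell}_1$ on every interval.
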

\begin{lemma}\label{lem.c1-1}
	If $\xa{} = 0$, $\xaa{} , \xc{} > 0$, then $\ell_1 \le \hat{\ell}_1$.
\end{lemma}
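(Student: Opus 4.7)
The plan is to follow the template established in Lemmas~\ref{lem.a1-1} and \ref{lem.a2-0} and illustrated in Section~\ref{subsub.illustrative example}: prove the bound by exhibiting, in every subcase, a nonnegative weighted combination of SE constraints whose combined inequality has $\ell_1$ on the left-hand side and $\hat{\ell}_1(s)$ on the right-hand side.

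Under the hypotheses $\xa = 0$ and $\xaa, \xc > 0$, the unilateral constraints \eqref{eq.a2} and \eqref{eq.c1} are unconditionally available, and the optimality constraints \eqref{eq.ll1}, \eqref{eq.ll2} and \eqref{eq.l2} always hold. I would then branch on (i) whether $\xcc=0$ or $\xcc>0$ (the latter makes \eqref{eq.c2} available), and (ii) for each of the ``or'' constraints \eqref{eq.ab1}/\eqref{eq.ab2}, \eqref{eq.ad1}/\eqref{eq.ad2}, \eqref{eq.abd1}/\eqref{eq.abd2}, \eqref{eq.acbd1}/\eqref{eq.acbd2}, and \eqref{eq.all1}/\eqref{eq.all2}, which of the two disjuncts is active. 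In each resulting subcase, one chooses weights on a carefully selected subset of the active constraints so that on the left-hand side the coefficients of $\xaa, \xc, \xcc$ are at least $1$ (matching $\ell_1 = \xaa + \xc + \xcc$, since $\xa = 0$) while the coefficients of the remaining nonnegative quantities $\xb, \xbb, \xd, \xdd$ are nonnegative, and on the right-hand side the expression simplifies exactly to $\hat{\ell}_1(s)$.

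The different branches of the piecewise definition of $\hat{\ell}_1$ in Theorem~\ref{thm.UB} arise naturally from this: for each subcase and each of the intervals $[1,s_1], [s_1,s_2], \ldots, [s_7,\infty)$, there is a specific certificate (weight vector) that is nonnegative precisely on that interval and that produces exactly the corresponding formula $\UBone$, $\UBtwo$, \ldots, $\UBeight$; the breakpoints $s_1,\ldots,s_7$ are the values of $s$ at which some weight crosses zero and the optimal certificate must be replaced. Concretely, I would read off the appropriate rows of Table~\ref{tab.UB} (which lists the active constraint set and its weights for each subcase/interval) and verify each line by the elementary algebraic simplification shown in Section~\ref{subsub.illustrative example}.

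The main obstacle is not any individual computation, which is routine once the weights are in hand, but the combinatorial bookkeeping: the five disjunctions together with the $\xcc$ split yield many subcases, and within each of them the valid choice of weights depends on the interval of $s$, so a careful organization (matching Table~\ref{tab.UB}) is needed to cover all combinations. A secondary difficulty is recognizing, in subcases where the ``first'' disjunct of an ``or'' constraint (e.g., \eqref{eq.all1}) gives a useless bound, that one must chain into the ``second'' disjunct of a preceding ``or'' constraint (e.g., \eqref{eq.ab2}), exactly as already done in the $\xa = \xaa = 0$ analysis of Theorem~\ref{thm.UB.simple}; getting this dependency graph right is what makes the certificate nonnegative on the intended interval.
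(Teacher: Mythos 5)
Your proposal matches the paper's own proof in both structure and substance: the paper handles Lemma~\ref{lem.c1-1} by splitting on $\xcc{}=0$ versus $\xcc{}>0$, branching on the active disjunct of the relevant ``or'' constraints (\eqref{eq.ab1}/\eqref{eq.ab2} and \eqref{eq.acbd1}/\eqref{eq.acbd2}), and certifying each subcase and $s$-interval by the nonnegative weighted combinations listed in the corresponding rows of Table~\ref{tab.UB}, combined via the same max-over-disjuncts / min-over-certificates bookkeeping you describe. Your account of how the certificates are verified (coefficients of $\xaa{},\xc{},\xcc{}$ at least $1$, remaining coefficients nonnegative, as in Section~\ref{subsub.illustrative example}) is exactly the paper's mechanism, so no gap to report.
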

\begin{lemma}\label{lem.c1-0c2-0}
	If $\xa{} = \xc{} = \xcc{} = 0$ and $\xaa{} > 0$, then $\ell_1 \le \hat{\ell}_1$.
\end{lemma}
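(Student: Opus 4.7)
The plan is to specialize the SE system of Section~\ref{sec:analysis} to the hypothesis $\xa = \xc = \xcc = 0$, $\xaa > 0$, and then carry out a weighted-combination argument in the style of Section~\ref{subsub.illustrative example}. Under this hypothesis the loads collapse to
\[
\ell_1 = \xaa, \qquad \ell_2 = \xb + \xbb + \xd + \xdd, \qquad \ell_1^* = \xbb/s + s\xd, \qquad \ell_2^* = \xaa/s + \xdd,
\]
so proving the lemma amounts to bounding $\xaa$ by $\hat\ell_1(s)$.

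The first step is a key simplification: for $s>1$ the inequalities \eqref{eq.acbd1} and \eqref{eq.all1} both reduce to $\xaa \le \xaa/s$, which contradicts $\xaa>0$. Hence the alternative branches \eqref{eq.acbd2} and \eqref{eq.all2} must both hold, and in the present setting they coincide, giving the single constraint $\ell_2 \le (\xb+\xbb)/s + s(\xd+\xdd)$. (The degenerate case $s=1$ corresponds to identical machines, where the distinction between good and bad jobs has no effect on loads and the claim reduces to the known identical-machine bound $4/3$.) This eliminates four of the eight possible branch combinations for free.

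The remaining analysis is driven by the three independent disjunctions \eqref{eq.ab1}/\eqref{eq.ab2}, \eqref{eq.ad1}/\eqref{eq.ad2}, and \eqref{eq.abd1}/\eqref{eq.abd2}, giving at most eight subcases. For each subcase and for each interval of $s$ in the piecewise definition of $\hat\ell_1$, I would form a nonnegative weighted sum of \eqref{eq.ll1}, \eqref{eq.ll2}, \eqref{eq.a2}, \eqref{eq.all2} and the three selected branches, taking the weights from the row of Table~\ref{tab.UB} corresponding to the triple $(\xa,\xaa,\xc,\xcc)=(0,1,0,0)$. The weights must be chosen so that (i) the coefficient of $\xaa$ on the left-hand side is at least $1$, (ii) the coefficients of $\xb,\xbb,\xd,\xdd$ are nonnegative, and (iii) the right-hand side equals the desired piece of $\hat\ell_1(s)$. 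Since $\ell_1 = \xaa$ here, dropping the nonnegative slack on the left yields $\ell_1 \le \hat\ell_1$.

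The main obstacle is not conceptual but algebraic bookkeeping: for each of the eight pieces of $\hat\ell_1$, one has to identify which of the subcases is binding and check that the tabulated weights are positive rational functions of $s$ on the declared interval and that the linear combination simplifies to exactly $\hat\ell_1(s)$. These verifications are of the same nature as the illustrative calculation in Section~\ref{subsub.illustrative example}. As a sanity check, on the middle interval $s_2 \le s \le s_3$ where $\hat\ell_1 = (s+1)/s = 1 + 1/s$, the bound already follows from Theorem~\ref{thm.UB.simple} (case $\xaa>0$), so only the remaining pieces of $\hat\ell_1$ require the refined weighted combination.
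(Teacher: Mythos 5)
Your overall strategy is the paper's: reduce to $\ell_1=\xaa$, branch on which side of each disjunctive SE condition holds, and certify each branch by a nonnegative weighted combination of constraints taken from Table~\ref{tab.UB}; your observations that \eqref{eq.ll2} alone gives $\ell_1\le s$ (which suffices for $s\le s_2$) and that Theorem~\ref{thm.UB.simple} covers the $1+1/s$ pieces correspond to rows \ref{lem.c1-0c2-0}\mi{}a and \ref{lem.c1-0c2-0}\mi{}b\ma{}\eqref{eq.ab1}\mi{}a. However, there is a concrete error in your specialization: with $\xa=\xc=\xcc=0$ the optimal loads are $\ell_1^*=\xbb/s+s\,\xd$ and $\ell_2^*=\xaa/s+\xb+\xdd$, not $\xaa/s+\xdd$ --- you dropped the $\xb$ term. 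This is not cosmetic. In the combination certifying $1+1/s$ (weights $\{s;\,s^2;\,1;\,s^2-1\}\cdot\frac{1}{s^2}$ on \eqref{eq.ll1}, \eqref{eq.ll2}, \eqref{eq.a2}, \eqref{eq.ab1}), the coefficient $+1$ that $\xb$ receives from $\ell_2^*\le 1$ is exactly what cancels the $-\xb$ contributions coming from $\ell_2$ in \eqref{eq.a2} and \eqref{eq.ab1}; with your $\ell_2^*$ the combination only yields $\xaa-\xb\le 1+1/s$, from which $\ell_1\le 1+1/s$ does not follow. The same cancellation is needed in the other rows for this lemma, so the ``algebraic bookkeeping'' you defer would fail as set up until the formula is corrected.

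Two smaller points. The simplification the paper actually exploits here is not the one you found: since $\ell_1=\xaa$, condition \eqref{eq.ab2} reads $\ell_2\le\xbb/s$, which together with $\ell_2\ge\xbb$ forces $\xb=\xd=\xdd=0$ and (for $s>1$) $\xbb=0$, whence \eqref{eq.a2} gives $\xaa\le\xaa/s$, a contradiction; so the entire \eqref{eq.ab2} branch is vacuous (this is why its row in Table~\ref{tab.UB} is empty), and the case tree collapses to \eqref{eq.ab1} with nested \eqref{eq.abd1}/\eqref{eq.abd2} and \eqref{eq.ad1}/\eqref{eq.ad2} splits. Your observation that \eqref{eq.acbd1} and \eqref{eq.all1} are likewise infeasible, forcing \eqref{eq.all2}, is correct but idle: none of the table's combinations for this lemma use \eqref{eq.all2}, so including it buys nothing. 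Neither of these affects correctness --- carrying the dead \eqref{eq.ab2} branch just costs you an empty case --- but the $\ell_2^*$ slip must be repaired before the weighted combinations go through.
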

\begin{lemma}\label{lem.c1-0c2-1}
	If $\xa{} = \xc{} = 0$ and $\xaa{}, \xcc{} > 0$, then $\ell_1 \le \hat{\ell}_1$.
\end{lemma}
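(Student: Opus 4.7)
The plan is to follow exactly the scheme illustrated in subsection~\ref{subsub.illustrative example}, driven by the appropriate row of Table~\ref{tab.UB}. Under the hypotheses $a_1=c_1=0$ and $a_2,c_2>0$, the load to be bounded simplifies to
\[
\ell_1 \;=\; a_2 + c_2,
\]
and both ``unilateral'' SE constraints \eqref{eq.a2} and \eqref{eq.c2} are active. The goal is, for every interval of $s$ appearing in the definition of $\hat\ell_1$, to exhibit a nonnegative linear combination of these constraints together with \eqref{eq.ll1}, \eqref{eq.ll2}, \eqref{eq.l2}, and one branch of each applicable disjunctive swap condition, that collapses to an inequality of the form $a_2 + c_2 \le \hat\ell_1(s)$.

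First I would record the ``always-on'' inequalities that are cheap to derive. Combining \eqref{eq.a2} with $\ell_1=a_2+c_2$ gives $c_2(1-1/s)\cdot s \le \ell_2$ when rearranged, and combining \eqref{eq.c2} with the same identity gives $a_2(1-s) \le \ell_2 - c_2$; together with \eqref{eq.l2} these produce weak but $s$-independent bounds, useful as sanity checks against the general bound $1+1/s$ of Theorem~\ref{thm.UB.simple}. Next, I would exploit the ``$*$'' freedom in the $a_1,c_1$ columns of the relevant row of Table~\ref{tab.UB} by setting $a_1=c_1=0$ throughout, which both simplifies the definition of $\ell_1^*$ and $\ell_2^*$ and kills several terms in the swap constraints.

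The core of the proof is the case distinction on the disjunctive SE conditions. Concretely, the branches \eqref{eq.ab1}/\eqref{eq.ab2}, \eqref{eq.ad1}/\eqref{eq.ad2}, \eqref{eq.abd1}/\eqref{eq.abd2}, \eqref{eq.acbd1}/\eqref{eq.acbd2}, \eqref{eq.all1}/\eqref{eq.all2} split the analysis into subcases. In each subcase I would pick the unique nonnegative weighting of the active constraints (read off Table~\ref{tab.UB}) whose left-hand sides sum to $a_2+c_2$ plus nonnegative multiples of the remaining variables $a_1,c_1,b_1,b_2,d_1,d_2$, and whose right-hand sides sum to the claimed $\hat\ell_1(s)$. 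Dropping the nonnegative left-hand side contributions of the other variables then yields $\ell_1=a_2+c_2 \le \hat\ell_1(s)$, exactly as in the illustrative claim of subsection~\ref{subsub.illustrative example}.

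The main obstacle will be the bookkeeping across the eight intervals $[1,s_1],[s_1,s_2],\ldots,[s_7,\infty)$ of Theorem~\ref{thm.UB}: for each interval one must (i) identify which branch of each disjunctive swap is actually the binding one, (ii) verify that the weights dictated by Table~\ref{tab.UB} are nonnegative on that interval (so the combined inequality points the right way) and that the coefficients of $b_1,b_2,d_1,d_2$ in the combined inequality are nonnegative (so they can be discarded), and (iii) check that the breakpoints $s_1,\ldots,s_7$ are precisely the values of $s$ at which the worst case switches from one weighting to another, i.e., at which two of the eight rational expressions $\UBone,\ldots,\UBeight$ cross. I expect the tightest subcase to be the one where both branches \eqref{eq.acbd2} and \eqref{eq.all1} are binding, since these provide the only two-sided control on the $b,d$ variables once $a_1=c_1=0$; the remaining subcases should reduce to weaker bounds already covered by previous lemmas or by the uniform bound $1+1/s$.
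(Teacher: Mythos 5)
Your proposal follows essentially the same route as the paper: a case split over the disjunctive SE conditions, with each branch closed by a nonnegative weighted combination of constraints read off the corresponding rows of Table~\ref{tab.UB}, after which the nonnegative coefficients of $\xb{},\xbb{},\xd{},\xdd{}$ are discarded to leave $\ell_1=\xaa{}+\xcc{}\le\hat{\ell}_1(s)$, exactly as in the illustrative example. The only discrepancies are cosmetic rather than substantive: for this lemma the paper's case tree branches on \eqref{eq.all1}/\eqref{eq.all2} at the top level and never invokes \eqref{eq.acbd1}/\eqref{eq.acbd2} (the tight subcases pass through \eqref{eq.all1}, \eqref{eq.ab2} and \eqref{eq.abd1}/\eqref{eq.abd2}, or through \eqref{eq.all2} and \eqref{eq.ad2}, not the pair you single out), and your preliminary rearrangements of \eqref{eq.a2} and \eqref{eq.c2} are mis-stated, but you flag these as non-load-bearing sanity checks and they do not affect the argument.
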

\begin{lemma}\label{lem.LB}
	The lower bound of $SPoA\geq \hat \ell_1$ is given by the instances  in \emph{Table~\ref{tab.LU}}.
\end{lemma}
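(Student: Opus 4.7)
The plan is to argue a matching lower bound by exhibiting, for each of the eight sub-intervals appearing in the definition of $\hat\ell_1$, an explicit instance together with a proposed strong equilibrium $x$ and an optimal allocation $x^*$, and then verifying (i) that $x$ is indeed a SE and (ii) that $C(x)/C(x^*)$ equals the corresponding expression $\UBone,\ldots,\UBeight$. Taking the instance with the largest ratio in each interval then yields $SPoA \geq \hat\ell_1$, which combined with the upper bounds from Lemmas~\ref{lem.a1-1}--\ref{lem.c1-0c2-1} finishes the proof of Theorem~\ref{thm.UB}. Table~\ref{tab.LU} lists the required instances, so the proof is largely a verification that each row of that table behaves as claimed.

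First, for each instance in Table~\ref{tab.LU} I would compute the load of machine~1 and machine~2 under the proposed allocation $x$ and under the proposed optimal allocation $x^*$, using only the rule that a job pays its size on its favorite machine and $s$ times its size elsewhere. From these one reads off $C(x)$ and $C(x^*)=opt$; a direct algebraic simplification (within the $s$-interval assigned to the row) shows that the ratio equals the corresponding piece of $\hat\ell_1$. This step is purely mechanical and is the same type of calculation already carried out implicitly in Section~\ref{sec:analysis} when the quantities $\xa{},\xaa{},\xb{},\ldots$ were introduced.

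The substantive step is verifying that each $x$ is a strong Nash equilibrium. Here I would invoke the framework from Section~\ref{sec:analysis}: an allocation fails to be SE iff some coalition $J$ can jointly move so that every $j\in J$ strictly improves. Because the instances use very few distinct job types, the collection of potentially improving coalitions is finite and small; moreover, the effect of any coalitional move on a machine's load is completely captured by how many units of ``bad'' and ``good'' mass are swapped, with the slowdown/speedup factor $s$. It therefore suffices to check, for each of the six swap patterns from \eqref{eq.a1}--\eqref{eq.all2} (and their mirror images obtained by exchanging the roles of the two machines), that at least one participant is not strictly better off. In practice this reduces to verifying a handful of numerical inequalities per instance, each of which is an equality or strict inequality of simple rational functions in $s$ that is valid throughout the relevant sub-interval; the boundary points $s_1,\ldots,s_7$ are precisely where two of these competing inequalities cross, which is why $\hat\ell_1$ is piecewise defined as it is.

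The main obstacle is the bookkeeping in this SE verification: with eight sub-intervals and several coalition types to check per instance, the case analysis is lengthy but entirely routine once the instances of Table~\ref{tab.LU} are in hand. A good sanity check at each boundary $s=s_i$ is that the two instances from the neighboring intervals give the same ratio at $s_i$, which is forced by continuity of $\hat\ell_1$ and serves as a useful cross-check that no instance was miscomputed. Once each row has been validated as a SE with the claimed ratio, the lower bound $SPoA \geq \hat\ell_1$ follows by taking the supremum over the instances, completing the lemma.
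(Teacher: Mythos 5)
Your overall strategy is the same as the paper's: for each of the eight $s$-intervals, take the corresponding row of Table~\ref{tab.LU}, check that the optimum is $1$, that the displayed schedule is an equilibrium, and that its makespan equals the claimed piece of $\hat\ell_1$. The one genuine gap is your claim that, for the SE verification, ``it suffices to check the six swap patterns from \eqref{eq.a1}--\eqref{eq.all2} (and their mirror images).'' Those six patterns were introduced in Section~\ref{sec:analysis} as an explicitly \emph{partial} list of necessary conditions, chosen because they happen to be enough to push the \emph{upper} bound through; they do not exhaust the coalitions. For the lower bound you must certify that \emph{no} coalition whatsoever improves, and several coalitions that the paper has to rule out (e.g.\ $\xaa{}$ together with $\{\xbb{},\xdd{}\}$, or $\{\xaa{},\xcc{}\}$ against various proper subsets of machine~2, or the single swap $\xaa{}\,\text{-}\,\xdd{}$) are not instances of the six listed patterns. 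Verifying only those six conditions therefore does not establish that the schedules are strong equilibria, and the ``therefore'' linking finiteness of the coalition space to sufficiency of the six patterns is a non sequitur.

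The paper closes this gap not by brute-force enumeration of all $2^{5}-1$ coalitions but by pruning the space with structural facts: (i) on two machines any strictly improving coalition must contain a job from each machine (Proposition~\ref{prop.swap}, due to Epstein); (ii) a coalition consisting only of good jobs cannot decrease the total load, so an improving coalition must contain a bad job, which for these instances forces $\xaa{}$ into the coalition (Proposition~\ref{prop.a2in}); and (iii) a case split on whether $\xaa{}$ is the only machine-1 job in the coalition (Proposition~\ref{prop.a2only}) or $\xcc{}$ deviates as well (final lemma of Appendix~\ref{app:proof_of_lower_bound}), each case reducing to a short list of explicit inequalities in $s$. If you replace your appeal to the six patterns with either a full enumeration over the at most five jobs per instance or a pruning argument of this kind, your proof goes through; the rest of your plan (ratio computation, optimum check, continuity cross-check at the breakpoints $s_1,\dots,s_7$) matches the paper.
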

Figure~\ref{fig:UB:subcases} illustrates the relation between the general bound and the bound proved in each of the these lemmas.

The proofs of these lemmas are based on Table~\ref{tab.UB}. Note that in Table~\ref{tab.UB}, each row has a bound for $\ell_1$ in the last column. Since the above illustrative example has already explained how the bounds are generated, here we mainly focus on the relationship of these bounds with the lemmas. 

The index (column 1) of each row in Table~\ref{tab.UB} encodes the relationship between those bounds (last column) and how these bounds should be combined to obtain the corresponding lemma.
Specifically, cases separated by ``\ma{}'' are subcases that should take \emph{maximum} of them due to we aim to measure the worst performance, while cases separated by ``\mi{}'' are a single case bounded by several different combinations of constrains that should take \emph{minimum} of them due to these constrains should hold at the same time. 
For instance, we consider three subcases in Lemma~\ref{lem.a2-0}, since $\xa{} = \xaa{} = 0$:
Case \ref{lem.a2-0}\ma{}1 ($\xc{}=0,\xcc{}>0$), Case~\ref{lem.a2-0}\ma{}2~($\xc{}>0,\xcc{}=0$) and Case~\ref{lem.a2-0}\ma{}3~($\xc{}>0,\xcc{}>0$), which correspond to rows \ref{lem.a2-0}\ma{}1 to \ref{lem.a2-0}\ma{}3\mi{}b in Table~\ref{tab.UB}.
The bound for Case~\ref{lem.a2-0}\ma{}3 is the minimum of the bounds of \ref{lem.a2-0}\ma{}3\mi{}a and \ref{lem.a2-0}\ma{}3\mi{}b.
Finally, the maximum of the bounds of the three subcases give the bound for Lemma~\ref{lem.a2-0}, i.e.,
$\max \big\{ 1,\, \frac{1}{s},\, \min\{\frac{2(s+1)}{s+2},\frac{s+2}{s+1}\} \big\}$ (orange line of Figure~\ref{fig:a2-0}).

The proofs of Lemmas~\ref{lem.a1-1}-\ref{lem.LB} are given in full detail in Appendixes~\ref{app:postponed proofs:lemmas} and \ref{app:proof_of_lower_bound}.
\begin{figure}[tb]
		\begin{subfigure}{.325\textwidth}
			\centering
			\includegraphics[width=1\textwidth,height=1.7cm]{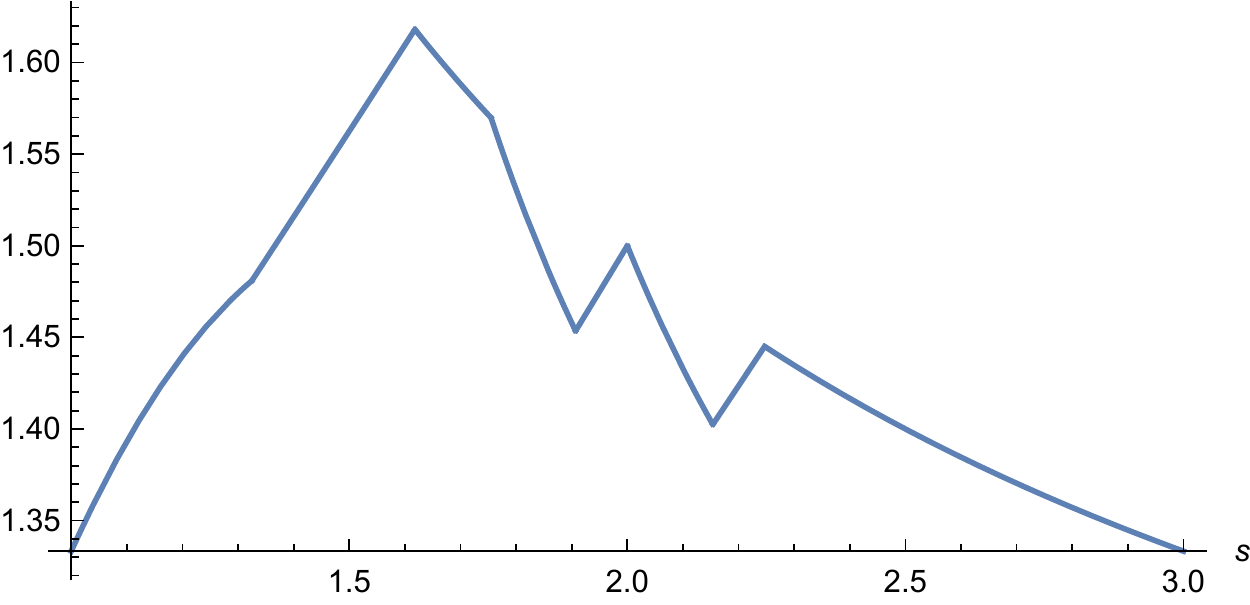}
			\caption{Theorem~\ref{thm.UB}.}
			\label{fig:UB}
		\end{subfigure}
		\begin{subfigure}{.325\textwidth}
			\centering
			\includegraphics[width=1\textwidth,height=1.7cm]{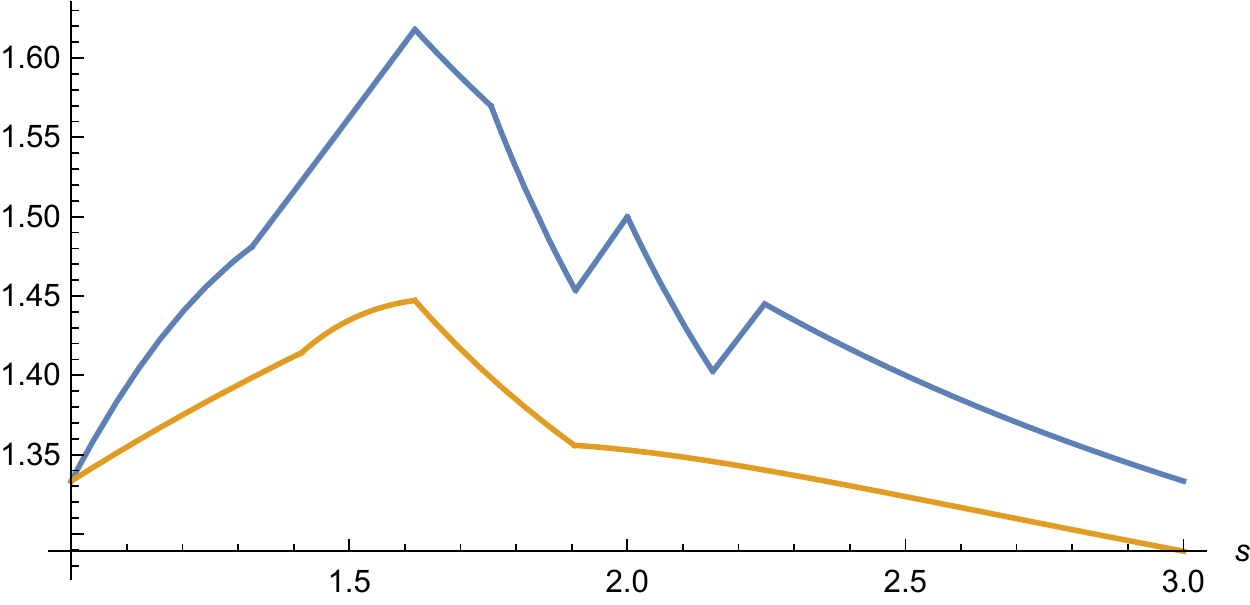}
			\caption{Lemma~\ref{lem.a1-1}.}
			\label{fig:a1-1}
		\end{subfigure}
		\begin{subfigure}{.325\textwidth}
			\centering
			\includegraphics[width=1\textwidth,height=1.7cm]{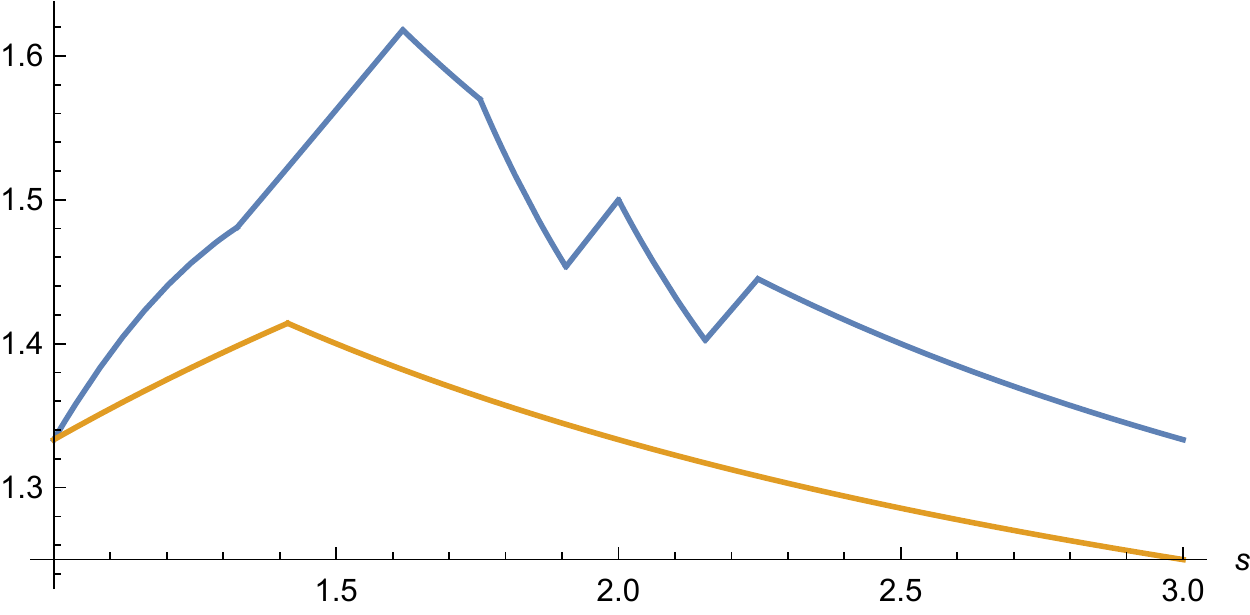}
			\caption{Lemma~\ref{lem.a2-0}.}
			\label{fig:a2-0}
		\end{subfigure}
		\begin{subfigure}{.325\textwidth}
			\centering
			\includegraphics[width=1\textwidth,height=1.7cm]{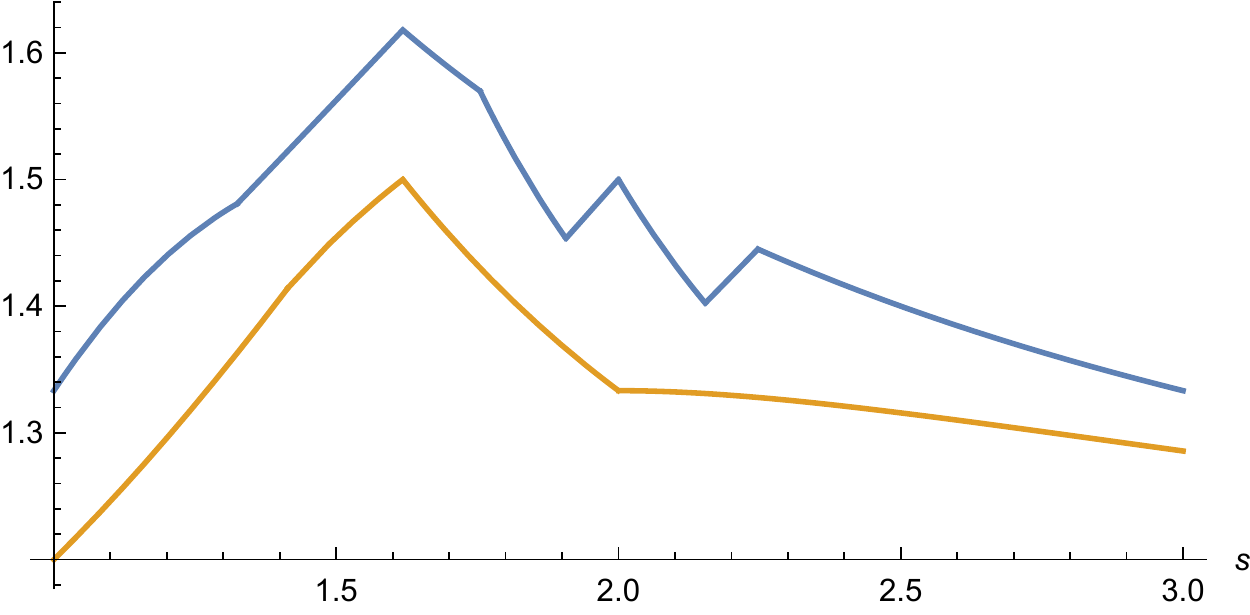}
			\caption{Lemma~\ref{lem.c1-1}.}
			\label{fig:c1-1}
		\end{subfigure}
		\begin{subfigure}{.325\textwidth}
			\centering
			\includegraphics[width=1\textwidth,height=1.7cm]{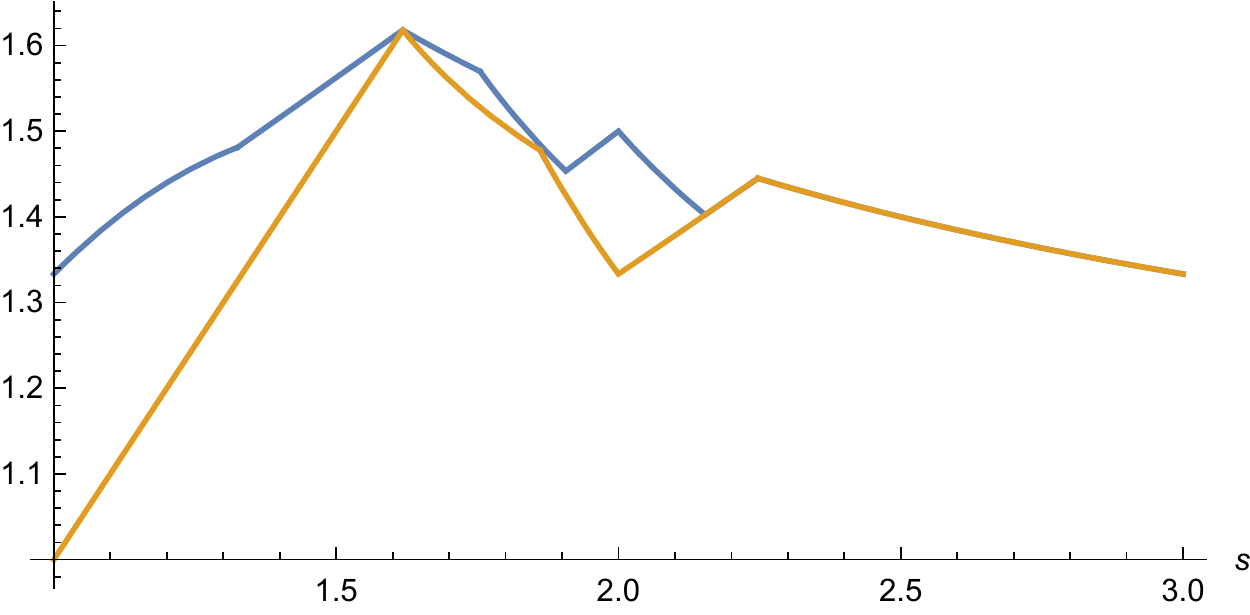}
			\caption{Lemma~\ref{lem.c1-0c2-0}.}
			\label{fig:c1-0c2-0}
		\end{subfigure}
		\begin{subfigure}{.325\textwidth}
			\centering
			\includegraphics[width=1\textwidth,height=1.7cm]{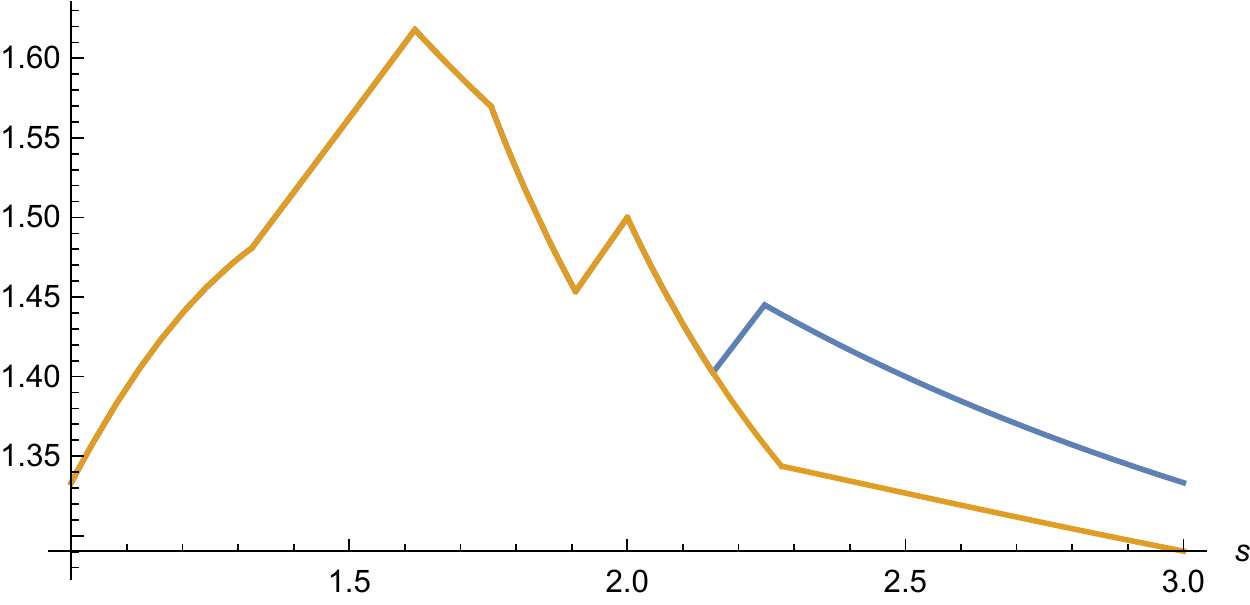}
			\caption{Lemma~\ref{lem.c1-0c2-1}.}
			\label{fig:c1-0c2-1}
		\end{subfigure}
		\caption{Proof of Theorem~\ref{thm.UB}.}
		\label{fig:UB:subcases}
	\end{figure}
\begin{table}[tb]
	\caption{Lower bound for $SPoA$.}
	\label{tab.LU}
	\centering

	\begin{threeparttable}
	\begin{tabular}{cccccccccc}
		\hline
	
		\hline
		 & $s$ & {$\xaa{}$} & {$\xbb{}$}  & {$\xcc{}$} & {$\xd{}$} & {$\xdd{}$} & {LB ($=\ell_1$)} & $\ell_2$\\
		\hline
		LB1 & $[0 , s_1]$ & $\frac{s^3+s^2}{s^3+2}$ & $\frac{s}{s^3+2}$  & $\frac{s+1}{s^3+2}$ & $\frac{s^2-1}{s^3+2}$ & $\frac{s^3-s^2-s+2}{s^3+2}$ & $\UBone{}$  & $\frac{s^3+1}{s^3+2}$\\
		LB2 & $[s_1 , s_2]$ & $\frac{s^2+s}{2s+1}$ & $\frac{s^2}{2s+1}$  & $\frac{s+1}{2s+1}$ & $0$ & $\frac{s}{2s+1}$ & $\UBtwo{}$ & $\frac{s^2+s}{2s+1}$ \\
		LB3 & $[s_2 , s_3]$ & $1$ & $s-1$ & $\frac{1}{s}$ & $0$ & $2-s$ & $\UBthree{}$ & $1$ \\
		LB4 & $[s_3 , s_4]$ & $\frac{s^2/(s-1)}{s^2+1}$ & $\frac{s^2}{s^2+1}$ & $\frac{s^2-s+1}{s^2+1}$ & $0$ & $\frac{1}{s^2+1}$ & $\UBfour{}$ & $1$ \\
		LB5 & $[s_4 , s_5]$ & $\frac{s}{2}$ & $\frac{s}{2}$ & $\frac{1}{2}$ & $0$ & $\frac{2-s}{2}$ & $\UBfive{}$ & $1$ \\
		LB6 & $[s_5 , s_6]$ & $\frac{1}{s-1}$ & $1$ & $\frac{s-1}{s}$ & $0$ & $0$ & $\UBsix{}$ & $1$ \\
		LB7 & $[s_6 , s_7]$ & $\frac{s^2}{2s-1}$ & $0$ & $0$ & $\frac{(s-1)^2}{2s-1}$ & $\frac{s-1}{2s-1}$ & $\UBseven{}$ & $\frac{s^2-s}{2s-1}$ \\
		LB8 & $[s_7 , \infty]$ & $\frac{s+1}{s}$ & $0$ & $0$ & $\frac{1}{s}$ & $\frac{s^2-s-1}{s^2}$ & $\UBeight{}$ & $\frac{s^2-1}{s^2}$ \\
		\hline
	
		\hline
		\end{tabular}
		\begin{tablenotes}
		\footnotesize
		\item  $^*$Note that $\xa{}=\xb{}=\xc{}=0$, and $\xaa,\xbb{},\xcc{},\xd{},\xdd{}$ each represent a single job here.
		\end{tablenotes}
	\end{threeparttable}
\end{table}\vspace{-0.4cm} 
\section{Price of Anarchy} 
\label{sec:poa}

In this section we prove the bounds on the $PoA$ in Theorem~\ref{thm.poa}. 
Suppose the smallest jobs in $\xa{},\, \xaa{},\, \xc{},\, \xcc{}$ are $\xa{}',\, \xaa{}',\, \xc{}',\, \xcc{}'$ respectively. 
To guarantee NE, it must hold that no single job in machine~1 can improve by moving  to machine~2, so that
\eqref{eq.a1}, \eqref{eq.a2}, \eqref{eq.c1} and \eqref{eq.c2} are also true for NE since $\xa{}' \le \xa{},\, \xaa{}' \le \xaa{},\, \xc{}' \le \xc{} \text{ and } \xcc{}' \le \xcc{}$.
Like in the analysis of the $SPoA$, we assume without loss of generality that $opt=1$, and thus \eqref{eq.ll1} and \eqref{eq.ll2} hold. We use these six constraints to prove Theorem~\ref{thm.poa}.

It is easy to see that if at most one of $\xa{},\, \xaa{},\, \xc{},\, \xcc{}$ is nonzero, then $\ell_1 \le s$, thus here we only discuss the cases where at least two of them are nonzero.
Similar to the proof of $SPoA$ the proofs of these lemmas are based on last four rows of Table~\ref{tab.UB}.

\begin{lemma}\label{cla.1}
	If $\xaa{}=0$, then $\ell_1 \le \frac{s^3+s^2+s+1}{s^2+s+1}$.
\end{lemma}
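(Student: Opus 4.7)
The strategy mirrors the weighted-inequality approach used for the strong price of anarchy in Section~\ref{sec:spoa}, now restricted to the single-job deviation constraints \eqref{eq.a1}--\eqref{eq.c2} and, as the authors signal, reading directly off the last four rows of Table~\ref{tab.UB}. As noted in the opening paragraph of Section~\ref{sec:poa}, even though these constraints are a priori about the smallest representative $\xa{}',\xaa{}',\xc{}',\xcc{}'$ of each group, the bound $\xa{}'\le\xa{}$ and its analogues let us invoke them with the full group sizes $\xa{},\xaa{},\xc{},\xcc{}$.

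First I would split into subcases according to which of $\xa{}, \xc{}, \xcc{}$ are strictly positive (the case $\xaa{}=0$ is given, and by the preliminary remark in Section~\ref{sec:poa} at least two of $\xa{}, \xaa{}, \xc{}, \xcc{}$ are nonzero). The subcases with $\xc{}=0$ are trivial: then $\ell_1=\xa{}+\xcc{}\le \ell_1^{*}\le 1$ directly from \eqref{eq.ll1}, and $1\le\frac{s^3+s^2+s+1}{s^2+s+1}$ for $s\ge 1$.

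For the remaining subcases (with $\xc{}>0$), I would form a non-negative weighted combination of \eqref{eq.ll1}, \eqref{eq.ll2}, and whichever of \eqref{eq.a1}, \eqref{eq.c1}, \eqref{eq.c2} correspond to positive variables, with weights tuned so that in the combined inequality (i) the coefficient of each of $\xa{},\xc{},\xcc{}$ is at least $1$ and (ii) the coefficient of each of $\xb{},\xbb{},\xd{},\xdd{}$ is non-negative, so the left-hand side dominates $\ell_1=\xa{}+\xc{}+\xcc{}$ and the right-hand side is just the sum of weights on \eqref{eq.ll1} and \eqref{eq.ll2}. Following the illustrative example of Section~3.2.1, the natural choice places weight $sW$ on \eqref{eq.ll1} and $W$ on \eqref{eq.ll2}, where $W$ is the total weight on the deviation constraints, so that the coefficient of $\xbb{}$ vanishes; the remaining weights among \eqref{eq.a1}, \eqref{eq.c1}, \eqref{eq.c2} are then balanced so that the three inequalities on the coefficients of $\xa{},\xc{},\xcc{}$ become binding. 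The small linear program that results gives $\ell_1\le\frac{s^3+s^2+s+1}{s^3+s^2+1}$ in the two-nonzero subcases such as $\xa{},\xc{}>0$, $\xcc{}=0$ (exactly the bound already computed in the illustrative example of Section~3.2.1), and an even smaller bound when all three of $\xa{},\xc{},\xcc{}$ are positive.

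To conclude, it suffices to observe that $\frac{s^3+s^2+s+1}{s^3+s^2+1}\le\frac{s^3+s^2+s+1}{s^2+s+1}$ for $s\ge 1$, since $s^3+s^2+1\ge s^2+s+1$. The main obstacle is really the case enumeration plus identifying the correct weights in each subcase and verifying they remain non-negative over the whole range $s\ge 1$; once the weights are in hand, collecting coefficients to extract the target inequality on $\ell_1$ is routine, exactly in the style of \eqref{eq.ill}.
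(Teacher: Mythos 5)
Your proposal is correct and follows essentially the same route as the paper: the paper's proof of this lemma simply points to rows \ref{lem.a1-1}\ma{}1, \ref{lem.a1-1}\ma{}2 and \ref{lem.a2-0}\ma{}3\mi{}a of Table~\ref{tab.UB} (case split on which of $\xa{},\xc{},\xcc{}$ are positive, weighted combinations of \eqref{eq.ll1}, \eqref{eq.ll2} and the single-deviation constraints) and checks each resulting bound against $\frac{s^3+s^2+s+1}{s^2+s+1}$. One small inaccuracy: for the subcase $\xa{}=0$, $\xc{},\xcc{}>0$ the relevant combination yields $\frac{2(s+1)}{s+2}$, which for $s>1$ exceeds the $\frac{s^3+s^2+s+1}{s^3+s^2+1}$ you claim for all two-nonzero subcases, so your final comparison step must also verify $\frac{2(s+1)}{s+2}\le\frac{s^3+s^2+s+1}{s^2+s+1}$ (equivalent to $s\le s^3$, hence harmless).
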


\begin{lemma}\label{cla.2}
	If $\xaa{}>0$ and $\xcc{}=0$, then $\ell_1 \le \frac{s^3+s^2+s+1}{s^2+s+1}$.
\end{lemma}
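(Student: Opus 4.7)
The hypotheses $\xaa>0$ and $\xcc=0$ give $\ell_1=\xa+\xaa+\xc$, make the NE constraint~\eqref{eq.a2} available, and render~\eqref{eq.c2} vacuous; constraint~\eqref{eq.a1} is available iff $\xa>0$, and~\eqref{eq.c1} iff $\xc>0$. The observation at the start of Section~\ref{sec:poa} already handles the case where at most one of $\xa,\xaa,\xc,\xcc$ is nonzero, giving $\ell_1\le s\le M$ with $M:=\frac{s^3+s^2+s+1}{s^2+s+1}$, so I only need to treat three subcases: (i) $\xa=0,\xc>0$; (ii) $\xa>0,\xc=0$; (iii) $\xa>0,\xc>0$.

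The unifying method, following the illustrative example of Section~\ref{subsub.illustrative example}, is to assign nonnegative multipliers $\alpha,\beta,\gamma,\delta,\epsilon$ to~\eqref{eq.ll1},~\eqref{eq.ll2},~\eqref{eq.a1},~\eqref{eq.a2},~\eqref{eq.c1} so that in the weighted sum the coefficients of $\xa,\xaa,\xc$ are at least $1$ and the coefficients of $\xb,\xbb,\xd,\xdd$ are nonnegative; then $\ell_1\le\alpha+\beta$, and the aim is $\alpha+\beta=M$.

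For subcases~(i) and~(iii) I plan to use the same multipliers with $\gamma=0$, namely
\[
\alpha=\frac{s(s^2+1)}{s^2+s+1},\ \beta=\frac{s^2+1}{s^2+s+1},\ \delta=\frac{1}{s^2+s+1},\ \epsilon=\frac{s^2}{s^2+s+1}.
\]
A direct expansion should yield: the coefficients of $\xaa$ and $\xc$ are each exactly $1$; the coefficient of $\xa$ equals $\alpha+\delta+\epsilon=M\ge 1$; the coefficients of $\xb,\xbb,\xdd$ are $0$; the coefficient of $\xd$ equals $\frac{s^2-1}{s^2+s+1}\ge 0$; and $\alpha+\beta=\frac{(s+1)(s^2+1)}{s^2+s+1}=M$. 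This proves $\ell_1\le M$ in~(i), where $\xa=0$ renders its coefficient immaterial, and in~(iii), where $M\ge 1$ suffices.

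Subcase~(ii) requires a different combination because \eqref{eq.c1} is unavailable; I will use only~\eqref{eq.ll1},~\eqref{eq.ll2},~\eqref{eq.a1},~\eqref{eq.a2} with multipliers of the shape $\alpha=su$, $\beta=u$, which zero out the coefficients of $\xb,\xbb,\xdd$ and keep that of $\xd$ nonnegative. This reduces matters to a small linear system in $\gamma,\delta\ge 0$ with $\gamma+\delta=u$; splitting at $s=\sqrt{2}$ yields $\ell_1\le\frac{2(s+1)}{s+2}$ for $s\le\sqrt{2}$ and $\ell_1\le s$ for $s\ge\sqrt{2}$, both of which are at most $M$ for $s\ge 1$ by direct comparison. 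The main obstacle is guessing the multipliers for subcases~(i) and~(iii); they can be derived from the LP dual, and it is a pleasant coincidence that $\gamma=0$ is optimal there, so the same choice handles both subcases at once.
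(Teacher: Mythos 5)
Your proposal is correct and takes essentially the same approach as the paper: the paper's own proof of this lemma also splits on which of $\xa{},\xc{}$ vanish and certifies each subcase by a nonnegative weighted combination of \eqref{eq.ll1}, \eqref{eq.ll2}, \eqref{eq.a1}, \eqref{eq.a2}, \eqref{eq.c1} (rows \ref{cla.2}.1-a, \ref{cla.2}.1-b and \ref{cla.2}.2 of Table~\ref{tab.UB}); your subcase~(ii) multipliers coincide with rows \ref{cla.2}.1-a/1-b, and your $\gamma=0$ multipliers for~(i)/(iii) are the row~\ref{cla.3} weights with \eqref{eq.c1} in place of \eqref{eq.c2}, which indeed checks out. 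The only (harmless) slip is arithmetic: with your weights the coefficient of $\xd{}$ is $\frac{s^4-1}{s^2+s+1}$, not $\frac{s^2-1}{s^2+s+1}$, but it is still nonnegative for $s\ge 1$, so the argument is unaffected.
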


\begin{lemma}\label{cla.3}
	If $\xaa{}>0$ and $\xcc{}>0$, then $\ell_1 \le \frac{s^3+s^2+s+1}{s^2+s+1}$.
\end{lemma}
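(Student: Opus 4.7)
The plan is to adapt the \emph{weighted combination of constraints} technique from Subsection~\ref{subsub.illustrative example}. Since $\xaa>0$ and $\xcc>0$, both Nash-equilibrium inequalities \eqref{eq.a2} and \eqref{eq.c2} are available; combined with \eqref{eq.ll1} and \eqref{eq.ll2} these will be the only ingredients. The aim is to choose strictly positive weights for these four inequalities so that, in the summed inequality, the coefficients of $\xa,\xaa,\xc,\xcc$ are at least $1$, the coefficients of $\xb,\xbb,\xd,\xdd$ are at least $0$, and the right-hand side equals $\frac{s^3+s^2+s+1}{s^2+s+1}$. Non-negativity of all variables will then immediately yield $\ell_1 = \xa+\xaa+\xc+\xcc \le \frac{s^3+s^2+s+1}{s^2+s+1}$, matching the announced bound.

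Concretely I would take weights
\[
\frac{s^3+s}{s^2+s+1},\quad \frac{s^2+1}{s^2+s+1},\quad \frac{1}{s^2+s+1},\quad \frac{s^2}{s^2+s+1}
\]
for \eqref{eq.ll1}, \eqref{eq.ll2}, \eqref{eq.a2}, \eqref{eq.c2} respectively. These are all strictly positive for $s\ge 1$, so every inequality keeps its direction, and they sum to exactly $\frac{s^3+s^2+s+1}{s^2+s+1}$. Expanding $\ell_1$ and $\ell_2$ inside the two deviation inequalities and collecting terms is a routine calculation that should produce coefficient $1$ on each of $\xaa$ and $\xcc$ (tight), coefficient $\frac{s^3+s^2+s+1}{s^2+s+1}\ge 1$ on each of $\xa$ and $\xc$, coefficient $0$ on each of $\xb,\xbb,\xdd$, and coefficient $\frac{s^4-1}{s^2+s+1}\ge 0$ on $\xd$ (using $s\ge 1$). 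Dropping the non-negative slack on $\xa,\xc,\xd$ then gives the claimed bound.

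The main obstacle is discovering the correct weights; verification is then mechanical. The principled route is to view this as the dual of the LP ``maximise $\ell_1$ over the non-negative orthant subject to \eqref{eq.ll1}, \eqref{eq.ll2}, \eqref{eq.a2}, \eqref{eq.c2}'' and conjecture that at the worst instance the coefficient conditions on $\xaa,\xbb,\xcc,\xdd$ are all tight; solving the resulting $4\times 4$ linear system uniquely determines the weights above. A pleasant feature of this case is that no further split on whether $\xa$ or $\xc$ vanishes is required, since the $\xa$ and $\xc$ coefficients in the combined inequality come out $\ge 1$ automatically, without ever invoking \eqref{eq.a1} or \eqref{eq.c1}; this is in contrast to Lemmas~\ref{cla.1}--\ref{cla.2}, where one of the two deviation inequalities is missing and several subcases must be treated.
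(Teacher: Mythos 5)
Your proposal is correct and is essentially identical to the paper's proof: row \ref{cla.3} of Table~\ref{tab.UB} combines exactly the constraints \eqref{eq.ll1}, \eqref{eq.ll2}, \eqref{eq.a2}, \eqref{eq.c2} with exactly the weights $\{s^3+s;\, s^2+1;\, 1;\, s^2\}/(s^2+s+1)$ that you give, and the resulting coefficients on $\xa,\xaa,\xc,\xcc,\xd$ and the vanishing coefficients on $\xb,\xbb,\xdd$ are as you state. The only nitpick is the phrase that the weights ``sum to'' $\frac{s^3+s^2+s+1}{s^2+s+1}$ --- what you mean (and what is true) is that the constant right-hand side of the combined inequality, contributed only by \eqref{eq.ll1} and \eqref{eq.ll2}, equals that quantity.
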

\begin{lemma}\label{cla.4}
	The lower bound of $PoA$ is achieved by the following case,
	\[
		\textstyle {\xaa{}} = \frac{s^3 + s^2}{s^2+s+1},\quad {\xb{}} = \frac{1}{s^2+s+1},\quad {\xbb{}} = \frac{s^3}{s^2+s+1},\quad {\xcc{}} = \frac{s+1}{s^2+s+1} \,,
	\]
	and $\xa{} = \xc{} = \xd{} = \xdd{} = 0 $.
\end{lemma}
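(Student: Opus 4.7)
The plan is to verify the two ingredients needed for the lower bound on the given instance: (i) the allocation $x$ that places $\xaa{}$ and $\xcc{}$ on machine~1 and $\xb{},\xbb{}$ on machine~2 is a pure Nash equilibrium, and (ii) the optimum cost is at most $1$, so that $PoA\ge \ell_1/1 = \frac{s^3+s^2+s+1}{s^2+s+1}$.

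First I would substitute the given values directly to obtain $\ell_1 = \xaa{} + \xcc{} = \frac{s^3+s^2+s+1}{s^2+s+1}$ and $\ell_2 = \xb{} + \xbb{} = \frac{s^3+1}{s^2+s+1}$, which yields the key identity $\ell_1 - \ell_2 = \frac{s^2+s}{s^2+s+1} = \xaa{}/s = s\cdot\xcc{}$. For the optimum, I would exhibit the natural reference allocation that sends $\xaa{}$ and $\xbb{}$ to their favorite machines while leaving $\xb{}$ and $\xcc{}$ in place; a direct computation shows that both machines have load exactly $\xcc{} + \xbb{}/s = \xb{} + \xaa{}/s = 1$, so $opt \le 1$.

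To verify the NE property I treat $\xaa{},\xb{},\xbb{},\xcc{}$ as single jobs (this is the tightest case, since any further subdivision would only weaken the relevant NE conditions \eqref{eq.a2} and \eqref{eq.c2}). On machine~1, the job $\xaa{}$ has favorite machine~2 and size $\xaa{}/s$, so deviating costs $\ell_2 + \xaa{}/s = \ell_1$; similarly the job $\xcc{}$ has favorite machine~1 and size $\xcc{}$, so deviating costs $\ell_2 + s\cdot\xcc{} = \ell_1$. Neither job strictly improves. On machine~2, both $\xb{}$ and $\xbb{}$ have favorite machine~1, but their potential deviation cost is $\ell_1 + \xb{}/s$ (respectively $\ell_1 + \xbb{}/s$), which strictly exceeds their current cost $\ell_2$ since $\ell_1 > \ell_2$.

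The main obstacle is essentially bookkeeping: the crucial observation is that the construction is engineered to make the two jobs on machine~1 \emph{exactly indifferent} between their current allocation and deviating, captured by the identity $\ell_1 - \ell_2 = \xaa{}/s = s\cdot\xcc{}$. Once this identity is verified, both NE and the $opt\le 1$ bound follow by direct substitution, giving $PoA \ge \frac{s^3+s^2+s+1}{s^2+s+1}$ and matching the upper bound from Lemmas~\ref{cla.1}--\ref{cla.3}.
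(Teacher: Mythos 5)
Your proposal is correct and follows essentially the same route as the paper: both verify by direct substitution that the two jobs on machine~1 are exactly indifferent to deviating (the identity $\ell_1-\ell_2=\xaa{}/s=s\cdot\xcc{}$), so the allocation is a NE with makespan $\frac{s^3+s^2+s+1}{s^2+s+1}$. You are in fact slightly more complete than the paper's proof of this lemma, which omits the explicit check that $opt\le 1$ (it is implicit in the general setup, via $\ell_1^*=\ell_2^*=1$ for this instance) and the trivial check that machine-2 jobs do not benefit from moving.
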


Lemmas~\ref{cla.1}-\ref{cla.4} (proofs see Appendix~\ref{app:proofs_of_PoA}) complete the proof of Theorem~\ref{thm.poa}.

\section{Conclusion and open questions}
In this work, we have analyzed both the \emph{price of anarchy} and the \emph{strong price of anarchy} on a simple though natural model of two machines in which each job has its own \emph{favorite} machine, and the other machine is \emph{$s$ times slower} machine. The model and the results extend the case of two \emph{related machines} with speed ratio $s$ \cite{Epstein2010}. In particular, we provide \emph{exact} bounds on $PoA$ and $SPoA$ for \emph{all values of $s$}. On the one hand, this allows us to compare with the same bounds for two related machines (see Figure~\ref{fig:bounds}). On the other hand, to the best of our knowledge, this is one of the first studies which considers in the analysis the \emph{processing time ratio} between different machines (with the exception of \cite{Epstein2010}). Prior work mainly focused on the asymptotic on the number of machines (resources) or/and number of jobs (users). Instead, the loss of efficiency due to selfish behavior is perhaps also caused by the presence of \emph{different} resources, even when the latter are few.

Unlike for two related machines, in our setting the $PoA$ grows with $s$ and thus the influence of coalitions and the resulting $SPoA$ is more evident. 
Note for example that the $SPoA \leq \phi = \frac{\sqrt{5}+1}{2} \simeq 1.618$ and this bound is attained for $s=\phi$ exactly like for two related machines (see Figure~\ref{fig:compare}). Also, for sufficiently large $s$, the two problems have  the exact same $SPoA$, though the $PoA$ is very much different. 

It is natural to study the $PoA$ and $SPoA$ depending on the specific speed ratio, or processing time ratio. In that sense, it would be interesting to extend  the analysis to more machines in the \emph{favorite} machines setting \cite{favorite}. There, an important parameter is also the minimum number $k$ of favorite machines per job. The case $k=1$ is perhaps interesting as, in the online setting, this gives a problem which is as difficult as the more general unrelated machines. Is it possible to characterize the $PoA$ and the $SPoA$ in this setting for any $s$? Do these bounds improve for larger $k$? Another interesting restriction would be the case of \emph{unit-size} jobs, which means that each job has processing time $1$ or $s$. Such \emph{two-values} restrictions have been studied in the \emph{mechanism design} setting with \emph{selfish machines} \cite{LavSwa09,Auletal15}, where players are machines and they possibly speculate on their true cost. Considering other well studied solution concepts would  also be interesting, including \emph{sequential} $PoA$ \cite{originalSPOA,bilo,SPOAcongestion,Gies17}, \emph{approximate} $SPoA$ \cite{feldman2009approximate}, and the  \emph{price of stochastic anarchy} \cite{ChuKatPhrRot08}. 

 \bibliographystyle{plain} 
 \bibliography{Mybib}

\newpage

\begin{landscape}
	\centering
	\begin{longtable}{l|l|l|l|l|l}
	\caption{Subcases to prove Lemmas~\ref{lem.a1-1}-\ref{lem.c1-0c2-1} and Lemmas~\ref{cla.2}-\ref{cla.3}}\label{tab.UB}\\
	\hline

	\hline
	\scalebox{0.9}[1.0]{Lemma.subcases} & $s$ & \scalebox{0.8}[1.0]{$\xa{},\xaa{},\xc{},\xcc{}$} & Constrains needed & Weight coefficient & Bounds \\
	\hline
	\endfirsthead
	\caption[]{(continued)}\\
	\hline

	\hline
	\endhead
	\hline
	\endfoot
	\endlastfoot

	\ref{lem.a1-1}\ma 1 & $[1,\infty]$ & $1,\, 0,\, 0,\, \zeroone$
	& $\eqref{eq.ll1}$ & $\{1\}$ & $1$ \\

	\ref{lem.a1-1}\ma 2 & $[1,\infty]$ & $1,\, 0,\, 1,\, \zeroone$
	& $\eqref{eq.ll1}, \eqref{eq.ll2}, \eqref{eq.a1}, \eqref{eq.c1}$ & $ \frac{\{ {s^3+s}; {s^2+1}; {s^2}; {1}\}}{{s^3+s^2+1}}$ & $\frac{s^3+s^2+s+1}{s^3+s^2+1}$ \\

	\ref{lem.a1-1}\ma 3\mi a & $[1,\sqrt{2}]$ & $1,\, 1,\, 0,\, \zeroone$
	& $\eqref{eq.ll1}, \eqref{eq.ll2}, \eqref{eq.a1}, \eqref{eq.a2}$ & $\{ {2 s}; \,{2}; \,{s^2}; \,{2-s^2}\} \cdot \frac{1}{s+2}$ & $\frac{2 (s+1)}{s+2}$ \\

	\ref{lem.a1-1}\ma 3\mi b & $[\sqrt{2},\infty]$ & $1,\, 1,\, 0,\, \zeroone$
	& $\eqref{eq.ll1}, \eqref{eq.ll2}, \eqref{eq.l2}, \eqref{eq.a1}$ & $ \frac{\{ {s^2}; {s}; {s (s^2-2)}; {s (s^2-1)}\}}{s^3-s+1}$ & $\frac{s (s^2+s-1)}{s^3-s+1}$ \\

	\ref{lem.a1-1}\ma 3\mi c & $[1,\infty]$ & $1,\, 1,\, 0,\, \zeroone$
	& $\eqref{eq.ll2}, \eqref{eq.a1}, \eqref{eq.a2}$ & $\{ {2 s}; {s}; {s}\} \cdot \frac{1}{2 s-1}$ & $\frac{2 s}{2 s-1}$ \\

	\ref{lem.a1-1}\ma 4\mi a & $[1, 1.272]$ & $1,\, 1,\, 1,\, \zeroone$
	& $\eqref{eq.ll1}, \eqref{eq.ll2}, \eqref{eq.a1}, \eqref{eq.a2}, \eqref{eq.c1}$ & $ \frac{\{ {2 s^3+s}; {2 s^2+1}; {s^4}; {-s^4+s^2+1}; {s^2}\}}{s^3+2 s^2+s+1}$ & $\frac{2 s^3+2 s^2+s+1}{s^3+2 s^2+s+1}$ \\

	\ref{lem.a1-1}\ma 4\mi b & \scalebox{0.9}[1.0]{$[1.272,\infty]$} & $1,\, 1,\, 1,\, \zeroone$
	& $\eqref{eq.ll1}, \eqref{eq.ll2}, \eqref{eq.l2}, \eqref{eq.a1}, \eqref{eq.c1}$ & $ \frac{\{ {s^3}; {s^2}; {s^4-s^2-1}; {s^2 (s^2-1)}; {s^2-1}\}}{s^4+s-1}$ & $\frac{s^4+s^3-1}{s^4+s-1}$ \\ \hline

	\ref{lem.a2-0}\ma 1 & $[1,\infty]$ & $0,\, 0,\, 0,\, 1$
	& $\eqref{eq.ll1}$ & $\{1\}$ & $1$ \\

	\ref{lem.a2-0}\ma 2 & $[1,\infty]$ & $0,\, 0,\, 1,\, 0$
	& $\eqref{eq.ll2}$ & $\{1/s\}$ & $1/s$ \\

	\ref{lem.a2-0}\ma 3\mi a & $[1,\infty]$ & $0,\, 0,\, 1,\, 1$
	& $\eqref{eq.ll1}, \eqref{eq.ll2}, \eqref{eq.c1}, \eqref{eq.c2}$ & $\{ {2 s}; {2}; {1}; {1} \} \cdot \frac{1}{s+2}$ & $\frac{2 (s+1)}{s+2}$ \\

	\ref{lem.a2-0}\ma 3\mi b & $[1,\infty]$ & $0,\, 0,\, 1,\, 1$
	& $\eqref{eq.ll1}, \eqref{eq.ll2}, \eqref{eq.c1}$ & $\{ {s}; {2}; {1} \} \cdot \frac{1}{s+1}$ & $\frac{s+2}{s+1}$ \\ \hline

	\ref{lem.c1-1}\ma 1\mi a & $[1,\infty]$ & $0,\, 1,\, 1,\, 0$
	& $\eqref{eq.ll2}, \eqref{eq.l2}, \eqref{eq.c1}$ & $ \frac{\{ {s^2}; {s^2-1}; {s^2-1} \}}{s^2+s-1}$ & $\frac{2 s^2-1}{s^2+s-1}$ \\

	\ref{lem.c1-1}\ma 1\mi b\ma \eqref{eq.ab1} & \scalebox{0.9}[1.0]{$[1.272,\infty]$} & $0,\, 1,\, 1,\, 0$
	& $\eqref{eq.ll1}, \eqref{eq.ll2}, \eqref{eq.l2}, \eqref{eq.c1}, \eqref{eq.ab1}$ & $ \frac{\{ {s^3}; {s^4}; {s^4-s^2-1}; {s^4-1}; { s^4-s^2} \}}{{2 s^4-s^2+s-1}}$ & $\frac{2 s^4+s^3-s^2-1}{2 s^4-s^2+s-1}$ \\

	\ref{lem.c1-1}\ma 1\mi b\ma \eqref{eq.ab2} & \scalebox{0.9}[1.0]{$[1.272,\infty]$} & $0,\, 1,\, 1,\, 0$
	& $\eqref{eq.ll2}, \eqref{eq.a2}, \eqref{eq.ab2}$ & $ \frac{\{ {s^2-s+1}; {(s-1)^2 (s+1)}; {s (s^2-1)} \}}{{s^3-2 s^2+s+1}}$ & $\frac{s^2-s+1}{s^3-2 s^2+s+1}$ \\

		\ref{lem.c1-1}\ma 2\mi a\ma \eqref{eq.acbd1} & $[1,\infty]$ & $0,\, 1,\, 1,\, 1$
	& $\eqref{eq.l2}, \eqref{eq.a2}, \eqref{eq.c1}, \eqref{eq.acbd1}$ & $ \{  {s^2};{s^2-1};{1};{1} \}\cdot\frac{1}{s^2-s+1}$ & $\frac{s^2}{s^2-s+1}$ \\

		\ref{lem.c1-1}\ma 2\mi a\ma \eqref{eq.acbd2} & $[1,\infty]$ & $0,\, 1,\, 1,\, 1$
	& $\eqref{eq.l2}, \eqref{eq.a2}, \eqref{eq.c2}, \eqref{eq.acbd2}$ & $  \frac{\{ {s^2-s+2};{s^2};{1};{s} \}}{s^2-s+1}$ & $\frac{s^2-s+2}{s^2-s+1}$ \\

	\ref{lem.c1-1}\ma 2\mi b & $[1,\sqrt{2}]$ & $0,\, 1,\, 1,\, 1$
	& $\eqref{eq.ll1}, \eqref{eq.ll2}, \eqref{eq.a2}, \eqref{eq.c1}, \eqref{eq.c2}$ & $ \frac{\{  {s (s^2+2)}; {s^2+2}; {2-s^2}; {s^2}; {s^2} \}}{s^2+2 s+2}$ & $\frac{s^3+s^2+2 s+2}{s^2+2 s+2}$ \\

	\ref{lem.c1-1}\ma 2\mi c & $[\sqrt{2},\infty]$ & $0,\, 1,\, 1,\, 1$
	& $\eqref{eq.ll1}, \eqref{eq.ll2}, \eqref{eq.l2}, \eqref{eq.c1}$ & $ \frac{\{  {s}; {s^2}; {s^2-2}; {s^2-1} \}}{s^2+s-1}$ & $\frac{2 s^2+s-2}{s^2+s-1}$ \\

	\hline

	\ref{lem.c1-0c2-0}\mi a & $[1,\frac{1+\sqrt{5}}{2}]$ & $0,\, 1,\, 0,\, 0$
	& $\eqref{eq.ll2}$ & $\{s\}$ & $s$ \\

	\ref{lem.c1-0c2-0}\mi b\ma \eqref{eq.ab1}\mi a & $[1,\infty]$ & $0,\, 1,\, 0,\, 0$
	& $\eqref{eq.ll1}, \eqref{eq.ll2}, \eqref{eq.a2}, \eqref{eq.ab1}$ & $\{  {s}; {s^2}; {1}; {s^2-1} \} \cdot \frac{1}{s^2}$ & $1+\frac{1}{s}$ \\

	\ref{lem.c1-0c2-0}\mi b\ma \eqref{eq.ab1}\mi b\ma \eqref{eq.abd1}\mi a & $[1,\infty]$ & $0,\, 1,\, 0,\, 0$
	& $\eqref{eq.ll1}, \eqref{eq.ll2}, \eqref{eq.ab1}, \eqref{eq.abd1}$ & $ \frac{\{  {s^2}; {s (s^2-1)}; {s (s^2-1)}; {s} \}}{s^3-1}$ & $\frac{s (s^2+s-1)}{s^3-1}$ \\

	\scalebox{.9}[1]{\ref{lem.c1-0c2-0}\mi b\ma \eqref{eq.ab1}\mi b\ma \eqref{eq.abd1}\mi b\ma \eqref{eq.ad1}} & $[1,\infty]$ & $0,\, 1,\, 0,\, 0$
	& $\eqref{eq.ll2}, \eqref{eq.l2}, \eqref{eq.ab1}, \eqref{eq.ad1}$ & $ \{\frac{s}{2 s-1};\frac{s}{2 s-1};\frac{s}{2 s-1};\frac{s}{2 s-1} \}$ & $\frac{2 s}{2 s-1}$ \\

	\scalebox{.9}[1]{\ref{lem.c1-0c2-0}\mi b\ma \eqref{eq.ab1}\mi b\ma \eqref{eq.abd1}\mi b\ma \eqref{eq.ad2}} & $[1,\infty]$ & $0,\, 1,\, 0,\, 0$
	& $\eqref{eq.ll1}, \eqref{eq.ab1}, \eqref{eq.ad2}, \eqref{eq.abd1}$ & $ \frac{\{  {s^2};{s^2-s};{s^2-s};{s^2} \}}{2 s^2-3 s+1}$ & $\frac{s^2}{2 s^2-3 s+1}$ \\

	\ref{lem.c1-0c2-0}\mi b\ma \eqref{eq.ab1}\mi b\ma \eqref{eq.abd2} & $[1,\infty]$ & $0,\, 1,\, 0,\, 0$
	& $\eqref{eq.ll2}, \eqref{eq.a2}, \eqref{eq.abd2}$ & $\{  {s^2}; {s}; {s} \} \cdot \frac{1}{2 s-1}$ & $\frac{s^2}{2 s-1}$ \\

	\ref{lem.c1-0c2-0}\mi b\ma \eqref{eq.ab2} & $[1,\infty]$ & $0,\, 1,\, 0,\, 0$
	& -- & -- & -- \\ \hline

	\ref{lem.c1-0c2-1}\ma \eqref{eq.all1}\mi a & $[1,\infty]$ & $0,\, 1,\, 0,\, 1$
	& $\eqref{eq.ll1}, \eqref{eq.ll2}, \eqref{eq.a2}, \eqref{eq.all1}$ & $\frac{\{  {s (s+1)}; {s+1}; {s+1}; {s^2/(s-1)} \}}{{2 s+1}}$ & $\frac{(	s+1)^2}{2 s+1}$ \\

	\ref{lem.c1-0c2-1}\ma \eqref{eq.all1}\mi b & $[1,\infty]$ & $0,\, 1,\, 0,\, 1$
	& $\eqref{eq.l2}, \eqref{eq.a2}, \eqref{eq.all1}$ & $\{ \frac{s+1}{s}; \frac{s+1}{s}; \frac{1}{(s-1) s} \}$ & $1+\frac{1}{s}$ \\

	\ref{lem.c1-0c2-1}\ma \eqref{eq.all1}\mi c\ma \eqref{eq.ab1} & $[1,\infty]$ & $0,\, 1,\, 0,\, 1$
	& $\eqref{eq.ll1}, \eqref{eq.ll2}, \eqref{eq.a2}, \eqref{eq.c2}, \eqref{eq.ab1}$ & $ \frac{\{  {s^2+1}; {s^3+s}; {1/s}; {s}; {s^3-1/s} \}}{{s^3+s+1}}$ & $\frac{s^3+s^2+s+1}{s^3+s+1}$ \\

	\ref{lem.c1-0c2-1}\ma \eqref{eq.all1}\mi c\ma \eqref{eq.ab2}\mi a & $[1,\infty]$ & $0,\, 1,\, 0,\, 1$
	& $\eqref{eq.l2}, \eqref{eq.a2}, \eqref{eq.ab2}$ & $\{ \frac{s^2-s+1}{(s-1) s}; \frac{s}{s-1}; \frac{1}{s-1} \}$ & $\frac{s^2-s+1}{(s-1) s}$ \\

	\scalebox{0.9}[1]{\ref{lem.c1-0c2-1}\ma \eqref{eq.all1}\mi c\ma \eqref{eq.ab2}\mi b\ma \eqref{eq.abd1}} & $[1,\infty]$ & $0,\, 1,\, 0,\, 1$
	& $\eqref{eq.ll1}, \eqref{eq.abd1}, \eqref{eq.all1}$ & $\{  \frac{s+1}{2}; \frac{s+1}{2 s}; \frac{s^2+1}{2 (s-1) s} \}$ & $\frac{s+1}{2}$ \\

	\scalebox{.9}[1]{\ref{lem.c1-0c2-1}\ma \eqref{eq.all1}\mi c\ma \eqref{eq.ab2}\mi b\ma \eqref{eq.abd2}} & $[1,\infty]$ & $0,\, 1,\, 0,\, 1$
	& \scalebox{0.9}[1.0]{$\eqref{eq.ll1}, \eqref{eq.l2}, \eqref{eq.a2}, \eqref{eq.ab2}, \eqref{eq.abd2}$} & \scalebox{0.92}[1.0]{$\frac{\{  (s-1) (s^2+1); {s}; {s(s^2+1)}; {s^3+s-1}; {1} \}}{{s^3-s^2+s-1}}$} & $\frac{s^3-s^2+2s-1}{s^3-s^2+s-1}$ \\

	\ref{lem.c1-0c2-1}\ma \eqref{eq.all2}\mi a\ma \eqref{eq.ad1} & $[1,\infty]$ & $0,\, 1,\, 0,\, 1$
	& $\eqref{eq.ll1}, \eqref{eq.ll2}, \eqref{eq.all2}, \eqref{eq.ad1}$ & $\frac{\{  {s}; {s^2+s+1}; {s^2/(s-1)}; {s+1} \}}{{2 s+1}}$ & $\frac{(s+1)^2}{2 s+1}$ \\

	\ref{lem.c1-0c2-1}\ma \eqref{eq.all2}\mi a\ma \eqref{eq.ad2} & $[1,\infty]$ & $0,\, 1,\, 0,\, 1$
	& \scalebox{0.96}[1.0]{$\eqref{eq.ll1}, \eqref{eq.ll2}, \eqref{eq.a2}, \eqref{eq.c2}, \eqref{eq.ad2}$} & \scalebox{0.89}[1.0]{$\frac{\{{s^3+s}; {s^2+1}; {s^3+s^2-s}; {s+1- \frac{1}{s}}; {s^3- \frac{1}{s}} \}}{s^3+2}$} & $\frac{s^3+s^2+s+1}{s^3+2}$ \\

	\ref{lem.c1-0c2-1}\ma \eqref{eq.all2}\mi b\ma \eqref{eq.ab1} & $[1,\infty]$ & $0,\, 1,\, 0,\, 1$
	& $\eqref{eq.ll1}, \eqref{eq.ll2}, \eqref{eq.a2}, \eqref{eq.c2}, \eqref{eq.ab1}$ & $ \frac{\{  {s^2+1}; {s^3+s}; {1/s}; {s}; {s^3-1/s} \}}{s^3+s+1}$ & $\frac{s^3+s^2+s+1}{s^3+s+1}$ \\

	\ref{lem.c1-0c2-1}\ma \eqref{eq.all2}\mi b\ma \eqref{eq.ab2} & $[1,\infty]$ & $0,\, 1,\, 0,\, 1$
	& $\eqref{eq.l2},\, \eqref{eq.a2},\, \eqref{eq.ab2},\, \eqref{eq.all2}$ & $\frac{\{ {s^2};\, {s(s+1)};\, {s+1};\, {1/(s-1)} \}}{s^2-1}$ & $\frac{s^2}{s^2-1}$ \\

	\hline

	\ref{cla.2}\ma 1\mi a & $[1,\sqrt{2}]$ & $1,\, 1,\, \zeroone,\, 0$
	& $\eqref{eq.ll1}, \eqref{eq.ll2}, \eqref{eq.a1}, \eqref{eq.a2}$ & $\{  {2 s}; {2}; {s^2}; {2-s^2} \} \cdot \frac{1}{s+2}$ & $\frac{2 (s+1)}{s+2}$ \\

	\ref{cla.2}\ma 1\mi b & $[\sqrt{2},\infty]$ & $1,\, 1,\, \zeroone,\, 0$
	& $\eqref{eq.ll1}, \eqref{eq.ll2}, \eqref{eq.a1}$ & $\{ {s^2}; {s (s^2-1)}; {s} \} \cdot \frac{1}{s^2+s-1}$ & $s$ \\

	\ref{cla.2}\ma 2 & $[1,\infty]$ & $0,\, 1,\, 1,\, 0$
	& $\eqref{eq.ll1}, \eqref{eq.ll2}, \eqref{eq.c1}$ & $ \frac{\{ {s (s^2-1)}; {s^2}; {s^2-1} \}}{s^2+s-1}$ & $s$ \\

	\ref{cla.3} & $[1,\infty]$ & $\zeroone,\, 1,\, \zeroone,\, 1$
	& $\eqref{eq.ll1}, \eqref{eq.ll2}, \eqref{eq.a2}, \eqref{eq.c2}$ & $ \frac{\{ {s^3+s}; {s^2+1}; {1};\, {s^2} \}}{s^2+s+1}$ & $\frac{s^3+s^2+s+1}{s^2+s+1}$ \\
	\hline 

	\hline
	\multicolumn{6}{l}{\quad\footnotesize Note: ``$\zeroone$'' means \emph{either $0$ or $1$} in the third column.}
	\end{longtable}
\end{landscape}

\newpage

\appendix
\section{Postponed Proofs}

\subsection{Proofs of Lemmas~\ref{lem.a1-1}-\ref{lem.c1-0c2-1}}\label{app:postponed proofs:lemmas}
\begin{proof}[Proof of Lemma~\ref{lem.a1-1}]
	Since $\xa{} > 0$, we consider 
	four subcases:  Case~\ref{lem.a1-1}\ma{}1~($\xaa{}, \xc{}=0$), Case~\ref{lem.a1-1}\ma{}2~($\xaa{}=0, \xc{}>0$), Case~\ref{lem.a1-1}\ma{}3~($\xaa{}>0, \xc{}=0$) and Case~\ref{lem.a1-1}\ma{}4~($\xaa{}, \xc{}>0)$, which correspond to rows \ref{lem.a1-1}\ma{}1 to \ref{lem.a1-1}\ma{}4\mi{}b in Table~\ref{tab.UB}.
	
	The bound for Case~\ref{lem.a1-1}\ma{}3 is the minimum of the bounds of \ref{lem.a1-1}\ma{}3\mi{}a, \ref{lem.a1-1}\ma{}3\mi{}b and \ref{lem.a1-1}\ma{}3\mi{}c. Similarly, the bound for Case~\ref{lem.a1-1}\ma{}4 is the minimum of the bounds of \ref{lem.a1-1}\ma{}4\mi{}a and \ref{lem.a1-1}\ma{}4\mi{}b.
	Finally, the maximum of the bounds of the four subcases give the bound for this lemma.
	These 4 subcases are summed up in Figure~\ref{fig:a1-1} (orange line).
\end{proof}

\begin{proof}[Proof of Lemma~\ref{lem.a2-0}]
	Since $\xa{} = \xaa{} = 0$, we consider three subcases: Case \ref{lem.a2-0}\ma{}1 ($\xc{}=0,\xcc{}>0$), Case~\ref{lem.a2-0}\ma{}2~($\xc{}>0,\xcc{}=0$) and Case~\ref{lem.a2-0}\ma{}3~($\xc{}>0,\xcc{}>0$), which correspond to rows \ref{lem.a2-0}\ma{}1 to \ref{lem.a2-0}\ma{}3\mi{}b in Table~\ref{tab.UB}.

	The bound for Case~\ref{lem.a2-0}\ma{}3 is the minimum of the bounds of \ref{lem.a2-0}\ma{}3\mi{}a and \ref{lem.a2-0}\ma{}3\mi{}b.
	Finally, the maximum of the bounds of the three subcases give the bound for this lemma.
	These 3 subcases are summed up in Figure~\ref{fig:a2-0} (orange line).
\end{proof}

\begin{proof}[Proof of Lemma~\ref{lem.c1-1}]
	Since $\xa{} = 0,\, \xaa{},\xc{}>0$, we consider two subcases: Case~\ref{lem.c1-1}\ma{}1 ($\xcc{}=0$) and Case \ref{lem.c1-1}\ma{}2~($\xcc{}>0$), which correspond to rows \ref{lem.c1-1}\ma{}1\mi{}a to \ref{lem.c1-1}\ma{}2\mi{}c in Table~\ref{tab.UB}.
	
	The bound for Case~\ref{lem.c1-1}\ma{}1 is the minimum of the bounds of \ref{lem.c1-1}\ma{}1\mi{}a and \ref{lem.c1-1}\ma{}1\mi{}b, where \ref{lem.c1-1}\ma{}1\mi{}b is the maximum of the bounds of \ref{lem.c1-1}\ma{}1\mi{}b\ma{}\eqref{eq.ab1} and \ref{lem.c1-1}\ma{}1\mi{}b\ma{}\eqref{eq.ab2}. Note that it only needs one of constraints \eqref{eq.ab1} and \eqref{eq.ab2} holds, thus we take the maximum of \ref{lem.c1-1}\ma{}1\mi{}b\ma{}\eqref{eq.ab1} and \ref{lem.c1-1}\ma{}1\mi{}b\ma{}\eqref{eq.ab2}.
	The bound for Case~\ref{lem.c1-1}\ma{}2 is the minimum of the bounds of \ref{lem.c1-1}\ma{}2\mi{}a, \ref{lem.c1-1}\ma{}2\mi{}b, \ref{lem.c1-1}\ma{}2\mi{}c, where \ref{lem.c1-1}\ma{}2\mi{}a is the maximum of the bounds of \ref{lem.c1-1}\ma{}2\mi{}a\ma{}\eqref{eq.acbd1} and \ref{lem.c1-1}\ma{}2\mi{}a\ma{}\eqref{eq.acbd2}.
	Finally, the maximum of the bounds of the two subcases give the bound for this lemma.
	These 2 subcases are summed up in Figure~\ref{fig:c1-1}.
\end{proof}

\begin{proof}[Proof of Lemma~\ref{lem.c1-0c2-0}]
	This lemma considers the case $\xa{} = \xc{} = \xcc{} = 0$ and $\xaa{} > 0$, which correspond to rows \ref{lem.c1-0c2-0}\mi{}a to \ref{lem.c1-0c2-0}\mi{}b\ma{}\eqref{eq.ab2} in Table~\ref{tab.UB}.
	
	Similar as the above proofs, the bound for this case is the minimum of the bounds of \ref{lem.c1-0c2-0}\mi{}a and \ref{lem.c1-0c2-0}\mi{}b, where \ref{lem.c1-0c2-0}\mi{}b is the maximum of the bounds of \ref{lem.c1-0c2-0}\mi{}b\ma{}\eqref{eq.ab1} and \ref{lem.c1-0c2-0}\mi{}b\ma{}\eqref{eq.ab2}. 
	Furthermore, the bound of \ref{lem.c1-0c2-0}\mi{}b\ma{}\eqref{eq.ab1} is the minimum of the bounds of \ref{lem.c1-0c2-0}\mi{}b\ma{}\eqref{eq.ab1}\mi{}a and \ref{lem.c1-0c2-0}\mi{}b\ma{}\eqref{eq.ab1}\mi{}b, where \ref{lem.c1-0c2-0}\mi{}b\ma{}\eqref{eq.ab1}\mi{}b is the maximum of \ref{lem.c1-0c2-0}\mi{}b\ma{}\eqref{eq.ab1}\mi{}b\ma{}\eqref{eq.abd1} and \ref{lem.c1-0c2-0}\mi{}b\ma{}\eqref{eq.ab1}\mi{}b\ma{}\eqref{eq.abd2}. Moreover \ref{lem.c1-0c2-0}\mi{}b\ma{}\eqref{eq.ab1}\mi{}b\ma{}\eqref{eq.abd1} is the minimum of \ref{lem.c1-0c2-0}\mi{}b\ma{}\eqref{eq.ab1}\mi{}b\ma{}\eqref{eq.abd1}\mi{}a and \ref{lem.c1-0c2-0}\mi{}b\ma{}\eqref{eq.ab1}\mi{}b\ma{}\eqref{eq.abd1}\mi{}b, where \ref{lem.c1-0c2-0}\mi{}b\ma{}\eqref{eq.ab1}\mi{}b\ma{}\eqref{eq.abd1}\mi{}b is the maximum of \ref{lem.c1-0c2-0}\mi{}b\ma{}\eqref{eq.ab1}\mi{}b\ma{}\eqref{eq.abd1}\mi{}b\ma{}\eqref{eq.ad1} and \ref{lem.c1-0c2-0}\mi{}b\ma{}\eqref{eq.ab1}\mi{}b\ma{}\eqref{eq.abd1}\mi{}b\ma{}\eqref{eq.ad2}.
	Finally, the bounds give the bound for this lemma, which is shown in Figure~\ref{fig:c1-0c2-0}.
\end{proof}

\begin{proof}[Proof of Lemma~\ref{lem.c1-0c2-1}]
	This lemma consider the case $\xa{} = \xc{} = 0$ and $\xaa{}, \xcc{} > 0$, which correspond to rows \ref{lem.c1-0c2-1}\ma{}\eqref{eq.all1}\mi{}a to \ref{lem.c1-0c2-1}\ma{}\eqref{eq.all2}\mi{}b\ma{}\eqref{eq.ab2} in Table~\ref{tab.UB}.
	
	Similar as the above proofs, the bound for this case is the maximum of the bounds of \ref{lem.c1-0c2-1}\ma{}\eqref{eq.all1} and \ref{lem.c1-0c2-1}\ma{}\eqref{eq.all2}, where \ref{lem.c1-0c2-1}\ma{}\eqref{eq.all1} is the minimum of \ref{lem.c1-0c2-1}\ma{}\eqref{eq.all1}\mi{}a, \ref{lem.c1-0c2-1}\ma{}\eqref{eq.all1}\mi{}b and \ref{lem.c1-0c2-1}\ma{}\eqref{eq.all1}\mi{}c, and \ref{lem.c1-0c2-1}\ma{}\eqref{eq.all2} is the minimum of \ref{lem.c1-0c2-1}\ma{}\eqref{eq.all2}\mi{}a and \ref{lem.c1-0c2-1}\ma{}\eqref{eq.all2}\mi{}b.
	Furthermore, the bound of \ref{lem.c1-0c2-1}\ma{}\eqref{eq.all1}\mi{}c is the maximum of the bounds of \ref{lem.c1-0c2-1}\ma{}\eqref{eq.all1}\mi{}c\ma{}\eqref{eq.ab1} and \ref{lem.c1-0c2-1}\ma{}\eqref{eq.all1}\mi{}c\ma{}\eqref{eq.ab2}, where \ref{lem.c1-0c2-1}\ma{}\eqref{eq.all1}\mi{}c\ma{}\eqref{eq.ab2} is the minimum of \ref{lem.c1-0c2-1}\ma{}\eqref{eq.all1}\mi{}c\ma{}\eqref{eq.ab2}\mi{}a and \ref{lem.c1-0c2-1}\ma{}\eqref{eq.all1}\mi{}c\ma{}\eqref{eq.ab2}\mi{}b where \ref{lem.c1-0c2-1}\ma{}\eqref{eq.all1}\mi{}c\ma{}\eqref{eq.ab2}\mi{}b is the maximum of \ref{lem.c1-0c2-1}\ma{}\eqref{eq.all1}\mi{}c\ma{}\eqref{eq.ab2}\mi{}b\ma{}\eqref{eq.abd1} and \ref{lem.c1-0c2-1}\ma{}\eqref{eq.all1}\mi{}c\ma{}\eqref{eq.ab2}\mi{}b\ma{}\eqref{eq.abd2}.
	Besides,
	the bound of \ref{lem.c1-0c2-1}\ma{}\eqref{eq.all2}\mi{}a is the maximum of bounds of \ref{lem.c1-0c2-1}\ma{}\eqref{eq.all2}\mi{}a\ma{}\eqref{eq.ad1} and \ref{lem.c1-0c2-1}\ma{}\eqref{eq.all2}\mi{}a\ma{}\eqref{eq.ad2}, and \ref{lem.c1-0c2-1}\ma{}\eqref{eq.all2}\mi{}b is the maximum of \ref{lem.c1-0c2-1}\ma{}\eqref{eq.all2}\mi{}b\ma{}\eqref{eq.ab1} and \ref{lem.c1-0c2-1}\ma{}\eqref{eq.all2}\mi{}b\ma{}\eqref{eq.ab2}.
	Finally, the bounds give the bound for this lemma, which is shown in Figure~\ref{fig:c1-0c2-1}. 
\end{proof}

\subsection{Proof of lower bound (Lemma~\ref{lem.LB})} 
\label{app:proof_of_lower_bound}

Table~\ref{tab.LU} gives instances that match all the bounds of Theorem~\ref{thm.UB}.
Each instance is represented by the setting of Figure~\ref{fig:setting} and each symbol (e.g. $\xa{}$, $\xaa{}$, $\xb{}\dots$) represents at most one job.

\begin{proposition}
	The optimal makespan of each instance of Table~\ref{tab.LU} is at most 1.
\end{proposition}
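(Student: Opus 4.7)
The plan is to exhibit, for every row of Table~\ref{tab.LU}, an explicit schedule whose makespan is at most $1$. Every row satisfies $\xa{}=\xb{}=\xc{}=0$, so the natural candidate is the ``swapped'' schedule that defines $\ell_1^*,\ell_2^*$ in Section~\ref{sec:setting}: send $\xaa{}$ from machine~$1$ to machine~$2$, send $\xbb{}$ and $\xd{}$ to machine~$1$, and leave $\xcc{}$ on machine~$1$ and $\xdd{}$ on machine~$2$. Under this schedule the two machine loads become
\[
\ell_1^* \;=\; \xbb{}/s + \xcc{} + s\cdot \xd{}\,, \qquad \ell_2^* \;=\; \xaa{}/s + \xdd{}\,,
\]
and the proposition reduces to checking $\ell_1^*\le 1$ and $\ell_2^*\le 1$ in each of the eight rows.

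I would then proceed row by row, substituting the entries of Table~\ref{tab.LU} into these two formulas. In every row the nonzero variables share a single common denominator, which is precisely what makes the arithmetic collapse: for rows LB1, LB2, LB3 and LB5 both numerators sum exactly to that common denominator, so $\ell_1^*=\ell_2^*=1$ identically. In rows LB4, LB6, LB7 and LB8 the value of $\xdd{}$ (or, in LB3, of $\xdd{}$ together with the other good job) is chosen so that one of $\ell_1^*,\ell_2^*$ is identically~$1$, while the other reduces to a rational function of $s$ whose numerator vanishes at the relevant threshold $s_i$ (for instance, in LB4 one gets $\ell_2^*=(2s-1)/((s-1)(s^2+1))$, which is $\le 1$ exactly when $s(s^2-s-1)\ge 0$, i.e.\ for $s\ge s_2$, covering the whole interval $[s_3,s_4]$). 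A brief polynomial sign check on the prescribed interval then yields the bound.

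The obstacle is purely computational bookkeeping: eight rows, two loads each, with different supports of nonzero variables and different common denominators, so the simplification must be repeated separately in each case. There is no new idea beyond verifying one polynomial identity and one polynomial inequality per row, because the instances in Table~\ref{tab.LU} were engineered so as to saturate, simultaneously, the equality used here to certify $opt\le 1$ and the SE constraints used in the matching upper-bound Lemmas~\ref{lem.a1-1}--\ref{lem.c1-0c2-1}. Once the two formulas for $\ell_1^*$ and $\ell_2^*$ above are recorded, the proof consists of eight short verifications within the stated intervals of $s$.
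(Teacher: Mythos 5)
Your proposal is correct and is essentially the paper's own proof: exhibit the ``swapped'' schedule of Figure~\ref{fig:setting}, which with $\xa{}=\xb{}=\xc{}=0$ gives $\ell_1^*=\xbb{}/s+\xcc{}+s\cdot\xd{}$ and $\ell_2^*=\xaa{}/s+\xdd{}$, and verify $\max\{\ell_1^*,\ell_2^*\}\le 1$ row by row (the paper simply asserts this is ``easy to see''). One small slip in your prose: in LB3 and LB5 only $\ell_1^*$ equals $1$ identically, while $\ell_2^*$ ($=1/s+2-s$ and $(3-s)/2$ respectively) needs the interval hypothesis on $s$ — but that is exactly the one-polynomial-inequality-per-row check your plan already prescribes, so nothing is missing.
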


\begin{proof}
	It is easy to see that, for every instance of Table~\ref{tab.LU}, the optimum schedule shown in Figure~\ref{fig:setting} satisfies $\max\{\ell_1^*, \ell_2^*\} \le 1$.
	Hence we can have $opt \le 1$ even though the above schedule is not guaranteed to be a SE,
	because the optimal SE can only have better makespan than that schedule.
\end{proof}

\begin{proposition}\label{prop.LBNE}
	Each schedule of Table~\ref{tab.LU} is a NE.
\end{proposition}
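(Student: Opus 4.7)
The plan is a direct case-by-case verification. In each row of Table~\ref{tab.LU} the variables $\xa{}, \xb{}, \xc{}$ are zero and each of the remaining nonzero entries $\xaa{}, \xbb{}, \xcc{}, \xd{}, \xdd{}$ represents a single job. Hence the candidate unilateral deviations are very few: the $\xaa{}$-job or the $\xcc{}$-job may move from machine~1 to machine~2, and the $\xbb{}$-job, $\xd{}$-job, or $\xdd{}$-job may move from machine~2 to machine~1. For each of the eight instances LB1--LB8 it therefore suffices to verify a small constant number of single-job-deviation inequalities over the specified interval of $s$.

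Concretely, writing out the load after each deviation and using that $\ell_1 \ge \ell_2$ in the tabulated allocations, the NE conditions reduce to the five inequalities $\ell_1 \le \ell_2 + \xaa{}/s$ (for the $\xaa{}$-job, which is exactly \eqref{eq.a2}), $\ell_1 \le \ell_2 + s \cdot \xcc{}$ (for the $\xcc{}$-job, exactly \eqref{eq.c2}), and the three reverse-direction inequalities $\ell_2 \le \ell_1 + \xbb{}/s$, $\ell_2 \le \ell_1 + s \cdot \xd{}$, $\ell_2 \le \ell_1 + s \cdot \xdd{}$ for the jobs currently on machine~2. The latter three are automatic whenever $\ell_1 \ge \ell_2$, so for each instance only the first two inequalities require genuine checking (and several instances have $\xaa{}=0$ or $\xcc{}=0$, further reducing the workload).

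For each row I would substitute the closed-form entries of $\ell_1$, $\ell_2$, $\xaa{}$, $\xcc{}$ from Table~\ref{tab.LU}, clear denominators, and confirm the resulting polynomial inequality in $s$ on the prescribed subinterval $[s_i, s_{i+1}]$. Because the tabulated instances were constructed precisely so that the bounds of Theorem~\ref{thm.UB} are tight, many of these inequalities will collapse to equalities; moreover the binding ones should coincide exactly with the ``Constrains needed'' column of Table~\ref{tab.UB} for the same subcase, which provides a useful consistency cross-check while carrying out the algebra.

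The main obstacle is purely bookkeeping: tracking eight instances and verifying up to two nontrivial inequalities per instance without arithmetic error. I would organize the verification as a companion table parallel to Table~\ref{tab.LU}, one row per instance and one column per deviation, recording the simplified difference $(\ell_2 + \xaa{}/s) - \ell_1$ and $(\ell_2 + s \cdot \xcc{}) - \ell_1$ as a rational function of $s$, and then checking nonnegativity on the relevant subinterval. Since the tabulated entries are low-degree rational functions of $s$ and the breakpoints $s_1,\dots,s_7$ are defined precisely by the relevant polynomial equations arising in the SPoA analysis, each of these checks reduces to a low-degree polynomial sign test on a specific interval, which is mechanical.
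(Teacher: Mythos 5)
Your proposal is correct and follows essentially the same route as the paper: observe that $\ell_1 \ge \ell_2$ rules out any beneficial unilateral move from machine~2, then verify for each row of Table~\ref{tab.LU} that $\ell_2 + \xaa{}/s \ge \ell_1$ and $\ell_2 + s\cdot\xcc{} \ge \ell_1$ on the relevant interval, which is exactly the case-by-case check the paper performs (with many cases indeed collapsing to equalities).
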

\begin{proof}
	Because $\ell_1 \ge \ell_2$, no job in machine 2 can reduce its cost by moving to machine 1.
	Thus we only need to check that if there is no single job in machine 1 would benefit from moving to machine 2.
	\begin{description}
		\item[LB1] If $\xaa{}$ moves to machine 2, $\ell_2' = \ell_2 + \xaa{}/s = \frac{s^3+s^2+s+1}{s^3+2} = \ell_1$.

		If $\xcc{}$ moves to machine 2, $\ell_2' = \ell_2 + \xcc{} \cdot s = \frac{s^3+s^2+s+1}{s^3+2} = \ell_1$.

		\item[LB2] If $\xaa{}$ moves to machine 2, $\ell_2' = \ell_2 + \xaa{}/s = \frac{s^2+2s+1}{2s+1} = \ell_1$.

		If $\xcc{}$ moves to machine 2, $\ell_2' = \ell_2 + \xcc{} \cdot s = \frac{2s^2+2s}{2s+1} \ge \ell_1 = \frac{s^2+2s+1}{2s+1}$.

		\item[LB3] If $\xaa{}$ moves to machine 2, $\ell_2' = \ell_2 + \xaa{}/s = \frac{s+1}{s} = \ell_1$.

		If $\xcc{}$ moves to machine 2, $\ell_2' = \ell_2 + \xcc{} \cdot s = 2 \ge \ell_1 = \frac{s+1}{s}$.

		\item[LB4] If $\xaa{}$ moves to machine 2, $\ell_2' = \ell_2 + \xaa{}/s = \frac{s^3-s^2+2s-1}{s^3-s^2+s-1} = \ell_1$.

		If $\xcc{}$ moves to machine 2, $\ell_2' = \ell_2 + \xcc{} \cdot s = \frac{s^4-s^3+s^2-1}{s^3-s^2+s-1} \ge \ell_1 = \frac{s^3-s^2+2s-1}{s^3-s^2+s-1}$ by $s \ge s_3 \approx 1.755$.

		\item[LB5] If $\xaa{}$ moves to machine 2, $\ell_2' = \ell_2 + \xaa{}/s = \frac{3}{2} \ge \ell_1 = \frac{s+1}{2}$ by $s \le s_5 =2$.

		If $\xcc{}$ moves to machine 2, $\ell_2' = \ell_2 + \xcc{} \cdot s = \frac{s+2}{2} \ge \ell_1 = \frac{s+1}{2}$.

		\item[LB6] If $\xaa{}$ moves to machine 2, $\ell_2' = \ell_2 + \xaa{}/s = \frac{s^2-s+1}{s^2-s} = \ell_1$.

		If $\xcc{}$ moves to machine 2, $\ell_2' = \ell_2 + \xcc{} \cdot s = s \ge \ell_1 = \frac{s^2-s+1}{s^2-s}$ by $s \ge s_5 = 2$.

		\item[LB7] If $\xaa{}$ moves to machine 2, $\ell_2' = \ell_2 + \xaa{}/s = \frac{s^2}{2s-1} = \ell_1$.

		\item[LB8] If $\xaa{}$ moves to machine 2, $\ell_2' = \ell_2 + \xaa{}/s = 2 \ge \ell_1 = \frac{s+1}{s}$.
	\end{description}
\end{proof}
According to Proposition~\ref{prop.LBNE} we know that no single job can reduce its cost by moving to other machine.
It also holds that

\begin{proposition}[Epstein \cite{Epstein2010}]\label{prop.swap}
Given a schedule on two machines which is a NE, if this schedule is not a SE, then a coalition of jobs where every job can reduce its cost consists of at least one job of each one of the machines.
\end{proposition}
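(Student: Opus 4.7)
The plan is to prove Proposition~\ref{prop.swap} by contradiction: assume that some coalition $J$ whose members all strictly improve consists entirely of jobs on one machine, and derive a contradiction from the Nash equilibrium condition. Without loss of generality, suppose every $j\in J$ currently sits on machine~1. Since $J$ by definition is the set of players whose strategy changes, and there are only two machines, every $j\in J$ must move to machine~2.

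Next I would compute the new load of machine~2 after the deviation. Write $x$ for the current allocation and $\hat x$ for the allocation after the coalition move. Because the coalition only adds jobs to machine~2 (no job is removed), we have
\begin{equation*}
\ell_{2}(\hat x) \;=\; \ell_{2}(x) + \sum_{k\in J} p_{2k} \;\ge\; \ell_{2}(x) + p_{2j}
\end{equation*}
for every single $j\in J$. Now I would invoke the NE condition at $x$ for the unilateral move of $j$ from machine~1 to machine~2: since $j$ would not gain by deviating alone, it must be that $\ell_{1}(x)\le \ell_{2}(x)+p_{2j}$. Chaining these two inequalities gives $\ell_{2}(\hat x)\ge \ell_{1}(x)$, i.e., the cost of $j$ after the coalition deviation is at least its cost before; this contradicts the assumption that every member of $J$ strictly improves.

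The symmetric argument handles the case where $J$ lies entirely on machine~2, so any improving coalition must contain at least one job from each machine. I do not expect a serious obstacle here: the whole proof rests on the monotonicity observation that dumping an entire group on one machine can only make that machine heavier than the worst single-job deviation already ruled out by the NE condition. The only minor care needed is to insist, as the paper's definition of SE does, that the coalition $J$ be exactly the set of players whose strategies change, so that every $j\in J$ indeed moves (otherwise a ``coalition'' could trivially contain non-moving jobs and the statement would need adjustment).
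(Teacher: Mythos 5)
Your proof is correct. Note that the paper itself gives no proof of this proposition --- it is imported verbatim from Epstein \cite{Epstein2010} --- so there is nothing to diverge from; your argument (a one-machine coalition moving together only loads the target machine at least as much as the single-job deviation already excluded by the NE condition $\ell_1(x)\le\ell_2(x)+p_{2j}$) is the standard and complete justification, and your closing remark that $J$ must consist exactly of the players who change strategy is precisely the point that makes the contradiction go through.
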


\begin{proposition}\label{prop.a2in}
	If a schedule of Table~\ref{tab.LU} is not a SE, then a coalition of jobs where every job can reduce its cost must consist of job $\xaa{}$.
\end{proposition}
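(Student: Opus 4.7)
The plan is to combine Proposition~\ref{prop.swap} with the structural observation that $\xa{}=\xb{}=\xc{}=0$ in every row of Table~\ref{tab.LU}, so machine~$1$ carries only the (at most two) jobs $\xaa{}$ and $\xcc{}$. For LB7 and LB8 we additionally have $\xcc{}=0$, leaving $\xaa{}$ as the unique machine-$1$ job; Proposition~\ref{prop.swap} then immediately forces any improving coalition to contain $\xaa{}$, and there is nothing else to check. The substance of the proof therefore lies in the other six rows, where $\xaa{},\xbb{},\xcc{}$ are all strictly positive.

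For LB1--LB6 I would argue by contradiction: assume an improving coalition $C$ excludes $\xaa{}$. By Proposition~\ref{prop.swap}, $C=\{\xcc{}\}\cup S$ for some nonempty $S\subseteq\{\xbb{},\xd{},\xdd{}\}$. After the coalition moves each $j\in S$ to machine~$1$ and $\xcc{}$ to machine~$2$, the new loads are
\[
\ell_1'=\xaa{}+\sum_{j\in S}p_1(j),\qquad \ell_2'=s\xcc{}+\ell_2-\sum_{j\in S}j,
\]
where $p_1(\xbb{})=\xbb{}/s$ and $p_1(\xd{})=s\xd{}$, $p_1(\xdd{})=s\xdd{}$. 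Summing the improvement requirements $\ell_1'<\ell_2$ (for the jobs of $S$) and $\ell_2'<\ell_1=\xaa{}+\xcc{}$ (for $\xcc{}$) and cancelling the common terms yields the telescoping inequality
\[
\sum_{j\in S}\bigl(p_1(j)-j\bigr)<-(s-1)\xcc{}\le 0.
\]
Since $p_1(\xd{})-\xd{}=(s-1)\xd{}\ge 0$ and $p_1(\xdd{})-\xdd{}=(s-1)\xdd{}\ge 0$, while $p_1(\xbb{})-\xbb{}=-(s-1)\xbb{}/s\le 0$ is the only term that can be negative, the inequality forces $\xbb{}\in S$.

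It then remains to derive a contradiction from $\xbb{}\in S$. Since $\ell_1'\ge\xaa{}+\xbb{}/s$, the improvement of $\xbb{}$ would require $\xaa{}+\xbb{}/s<\ell_2$. I would verify by direct substitution of the closed-form values in Table~\ref{tab.LU} that, on the contrary, $\xaa{}+\xbb{}/s>\ell_2$ throughout the range of $s$ associated with each of LB1--LB6; each case reduces to a short polynomial inequality (for instance LB1 reduces to $s^2>0$, LB2 to $s>0$, and LB6 to $s^2-3s+1<0$ on $[2,s_6]$). This contradicts $\xbb{}$'s improvement and completes the argument. The main obstacle is precisely this final finite case-check, but the telescoping step has already done the conceptual work by isolating $\xbb{}$ as the only possible bridge of an $\xaa{}$-free coalition.
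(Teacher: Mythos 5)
Your proof is correct and follows essentially the same route as the paper: both arguments reduce an $\xaa{}$-free coalition to one that must contain $\xbb{}$ (the paper via the observation that a coalition whose machine-to-machine moves are all good-to-bad cannot decrease the total load, you via the equivalent telescoping sum of the improvement inequalities, which isolates $\xbb{}$ as the only source of a negative term), and both finish with the identical per-instance verification that $\xaa{}+\xbb{}/s \ge \ell_2$ for LB1--LB6. The LB7/LB8 case is likewise handled the same way, via Proposition~\ref{prop.swap} and the fact that $\xaa{}$ is the only job on machine~1 there.
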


\begin{proof}
	Because good job will become bad job after swap, if the coalition consists of only good jobs, it holds that $\ell_1'+\ell_2' \ge \ell_1+\ell_2$.
	Thus at least one of the cost of the jobs will get worse after swap.
	Therefore, if a schedule is not a SE, then a coalition of jobs where every job can reduce its cost must consist of at least one bad job.

	For LB7 and LB8, $\xaa{}$ is in the coalition for sure, since $\xaa{}$ is the only good job.
	For other instances in Table~\ref{tab.LU}, if $\xaa{}$ is not in the coalition then $\xbb{}$ must in it according to Proposition~\ref{prop.swap}.
	Next we will show that $\xaa{}+\xbb{}/s \ge \ell_2$ for every instance of them, so that if $\xbb{}$ is in the coalition then $\xbb{}$ will not benefit from moving to machine 1. Thus $\xaa{}$ must be in the coalition.
	For LB1 to LB3, we have $\xaa{} \ge \ell_2$ thus $\xaa{}+\xbb{}/s \ge \ell_2$ holds.
	For LB4, $\xaa{}+\xbb{}/s = \frac{2s^2-s}{s^3-s^2+s-1} \ge \ell_2 = 1$ by $s \le s_4$.
	For LB5, $\xaa{}+\xbb{}/s = \frac{s}{2} +\frac{1}{2} \ge \ell_2 = 1$.
	For LB6, $\xaa{}+\xbb{}/s = \frac{1}{s-1} +\frac{1}{s} \ge \ell_2 = 1$ by $s \le s_6 \approx 2.154$.	
\end{proof}

\begin{proposition}\label{prop.a2only}
	For any instance of Table~\ref{tab.LU}, if $\xaa{}$ is the only job of machine 1 that in a coalition of jobs, these jobs cannot reduce their costs at the same time.
\end{proposition}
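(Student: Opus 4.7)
The plan is to argue by contradiction. Suppose a coalition consisting of $\xaa{}$ (the unique machine-$1$ participant) together with some non-empty $S \subseteq \{\xbb{}, \xd{}, \xdd{}\}$ has every member strictly improving. Writing $p_{1,\xbb{}} = \xbb{}/s$, $p_{1,\xd{}} = s\xd{}$, and $p_{1,\xdd{}} = s\xdd{}$ for the processing times on machine~$1$, the new loads become
\[
\ell_1' = \ell_1 - \xaa{} + \sum_{j \in S} p_{1j} \quad \text{and} \quad \ell_2' = \ell_2 + \xaa{}/s - \sum_{j \in S} p_{2j},
\]
and strict improvement requires both $\ell_2' < \ell_1$ (for $\xaa{}$) and $\ell_1' < \ell_2$ (for every $j \in S$). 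Proposition~\ref{prop.LBNE} already yields $\ell_2 + \xaa{}/s \ge \ell_1$ for every row of Table~\ref{tab.LU}, so the first requirement forces $S \ne \emptyset$ and in fact $\sum_{j \in S} p_{2j} > \ell_2 + \xaa{}/s - \ell_1 \ge 0$.

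The decisive lever is monotonicity in $S$: enlarging $S$ only increases $\ell_1'$. Consequently, for each candidate $j \in \{\xbb{}, \xd{}, \xdd{}\}$, if $\ell_1 - \xaa{} + p_{1j} \ge \ell_2$ already holds, then $j$ cannot strictly improve in any coalition containing it. My first step is therefore to check this ``elimination inequality'' for every positive $j$ in every row of Table~\ref{tab.LU}; a direct substitution collapses it to a single polynomial in $s$ whose critical value is precisely one of the breakpoints $s_1, \ldots, s_7$, so the inequality holds on the whole subinterval associated with the row. For rows where every candidate is eliminated in this way (e.g.~LB1, where one verifies $\ell_1 - \xaa{} + \xbb{}/s - \ell_2 = (s+2-s^3-1)/(s^3+2) \ge 0$ iff $s^3 \le s+1$, and similar inequalities for $\xd{}$ and $\xdd{}$), the proof is immediate. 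For the remaining rows I would use the gap $\ell_2 + \xaa{}/s - \ell_1$ (tight or small in every instance of Table~\ref{tab.LU}) to show that $\xaa{}$'s improvement forces $S$ to contain a job already eliminated by the first check, producing the required contradiction.

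The main obstacle is purely combinatorial: up to seven non-empty subsets $S$ per row, across eight rows, amounts to around forty sub-checks. My hope is that the monotonicity reduction cuts most of them to a single polynomial comparison per (row, $j$) pair, and that these polynomials re-encode the same breakpoints already appearing in Lemma~\ref{lem.LB}. The subtle case is LB5, where $\ell_2 + \xaa{}/s = 3/2$ exceeds $\ell_1 = (s+1)/2$ strictly, leaving genuine slack for $\xaa{}$; there one checks by hand that the only coalitions capable of filling this slack necessarily include $\xbb{}$, but $\xbb{}$ is already eliminated by the first check since $\ell_1 - \xaa{} + \xbb{}/s = 1 = \ell_2$, so any such coalition fails for $\xbb{}$.
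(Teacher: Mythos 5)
Your proposal is correct and follows essentially the same route as the paper's proof: the per-job singleton check $\ell_1 - \xaa{} + p_{1j} \ge \ell_2$ combined with monotonicity of $\ell_1'$ in the coalition is exactly the mechanism behind the paper's case-by-case ``no $\xaa{}\text{-}\{*\}$ swap'' verifications, and the remaining singleton inequalities you defer do all reduce to polynomial comparisons whose roots are the breakpoints $s_1,\dots,s_7$ (or hold for all $s$). Your treatment of LB5 --- the one row where a machine-2 job ($\xdd{}$) survives the elimination check, handled via the exact slack $\ell_2+\xaa{}/s-\ell_1=(2-s)/2=\xdd{}$ so that $\xaa{}$ fails to improve unless $\xbb{}$ joins --- matches the paper's explicit checks for that instance.
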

\begin{proof}
	According to Proposition~\ref{prop.swap}, if $\xaa{}$ is the only job of machine 1 that in a coalition of jobs, there must be some job of machine 2 in this coalition. Thus this proposition says there are no such a $\xaa{}\text{-}\{*\}$ swap that all these jobs can benefit simultaneously, where $\{*\}$ is any subset of jobs of machine 2.
	We prove this for the 8 instances of Table~\ref{tab.LU} one by one.
	\begin{description}
	\item[LB1] In this case $\xaa{}$ moves to machine 2 and $\xcc{}$ stays, it holds that no job (or subset of jobs) in machine 2 can benefit from moving to machine 1, because
	\begin{description}
	 	\item[No $\xaa \text{-} \xbb{}$ swap:] $\ell_1' = \xcc{} + \xbb{}/s = \frac{s+2}{s^3+2} \ge \ell_2 = \frac{s^3+1}{s^3+2}$ by $s \le s_1 \approx 1.325$;

	 	\item[No $\xaa \text{-} \xd{}$ swap:] $\ell_1' = \xcc{} + \xd{} \cdot s = \frac{s^3+1}{s^3+2} = \ell_2$;

	 	\item[No $\xaa \text{-} \xdd{}$ swap:] $\ell_1' = \xcc{} + \xdd{} \cdot s = \frac{s^4-s^3-s^2+3s+1}{s^3+2} > \ell_2 = \frac{s^3+1}{s^3+2}$ by $s \le s_1$.
	\end{description}

	\item[LB2] In this case $\xaa{}$ moves to machine 2 and $\xcc{}$ stays, it holds that no job (or subset of jobs) in machine 2 can benefit from moving to machine 1, because
	\begin{description}
	 	\item[No $\xaa \text{-} \xbb{}$ swap:] $\ell_1' = \xcc{} + \xbb{}/s = 1 \ge \ell_2 = \frac{s^2+s}{2s+1}$ by $s \le s_2 \approx 1.618$;
	 	\item[No $\xaa \text{-} \xdd{}$ swap:] $\ell_1' = \xcc{} + \xdd{} \cdot s = \frac{s^2+s+1}{2s+1} > \ell_2 = \frac{s^2+s}{2s+1}$.
	\end{description}

	\item[LB3] In this case $\xaa{}$ moves to machine 2 and $\xcc{}$ stays, it holds that no job (or subset of jobs) in machine 2 can benefit from moving to machine 1, because
	\begin{description}
	 	\item[No $\xaa \text{-} \xbb{}$ swap:] $\ell_1' = \xcc{} + \xbb{}/s = 1 = \ell_2$;
	 	\item[No $\xaa \text{-} \xdd{}$ swap:] $\ell_1' = \xcc{} + \xdd{} \cdot s = \frac{1}{s}+2s-s^2 > \ell_2 = 1$ by $s \le s_3 \approx  1.755$.
	\end{description}

	\item[LB4] In this case $\xaa{}$ moves to machine 2 and $\xcc{}$ stays, it holds that no job (or subset of jobs) in machine 2 can benefit from moving to machine 1, because
	\begin{description}
	 	\item[No $\xaa \text{-} \xbb{}$ swap:] $\ell_1' = \xcc{} + \xbb{}/s = 1 = \ell_2$;
	 	\item[No $\xaa \text{-} \xdd{}$ swap:] $\ell_1' = \xcc{} + \xdd{} \cdot s = 1 = \ell_2$.
	\end{description}

	\item[LB5] In this case $\xaa{}$ moves to machine 2 and $\xcc{}$ stay, it holds that no job (or subset of jobs) in machine 2 can benefit from moving to machine 1, because
	\begin{description}
	 	\item[No $\xaa \text{-} \xbb{}$ swap:] $\ell_1' = \xcc{} + \xbb{}/s = 1 = \ell_2$;
	 	\item[No $\xaa \text{-} \xdd{}$ swap:] $\ell_2' = \xbb{} + \xaa{} /s = \ell_1 = \frac{s+1}{2} $;
	 	\item[No $\xaa \text{-} \{ \xbb{},\xdd{}\}$ swap:] $\ell_1' = \xcc{} + \xbb{}/s + \xdd{} \cdot s = \frac{-s^2+2s+2}{2} \ge \ell_2 = 1$.
	\end{description}

	\item[LB6] In this case $\xaa{}$ moves to machine 2 and $\xcc{}$ stays, it holds that no job (or subset of jobs) in machine 2 can benefit from moving to machine 1, because
	\begin{description}
	 	\item[No $\xaa \text{-} \xbb{}$ swap:] $\ell_1' = \xcc{} + \xbb{}/s = 1 = \ell_2$.
	\end{description}

	\item[LB7] In this case $\xaa{}$ moves to machine 2, it holds that no job (or subset of jobs) in machine 2 can benefit from moving to machine 1, because
	\begin{description}
	 	\item[No $\xaa \text{-} \xd{}$ swap:] $\ell_1' = \xd{} \cdot s = \frac{s(s-1)^2}{2s-1} \ge \ell_2 = \frac{s^2-s}{2s-1}$ by $s \ge s_6 \approx 2.154$;
	 	\item[No $\xaa \text{-} \xdd{}$ swap:] $\ell_1' = \xdd{} \cdot s = \frac{s^2-s}{2s-1} = \ell_2$.
	\end{description}

	\item[LB8] In this case $\xaa{}$ moves to machine 2, it holds that no job (or subset of jobs) in machine 2 can benefit from moving to machine 1, because
	\begin{description}
	 	\item[No $\xaa \text{-} \xd{}$ swap:] $\ell_1' = \xd{} \cdot s = 1 \ge \ell_2 = \frac{s^2-1}{s^2}$;
	 	\item[No $\xaa \text{-} \xdd{}$ swap:] $\ell_1' = \xdd{} \cdot s = \frac{s^2-s-1}{s} \ge \ell_2 = \frac{s^2-1}{s^2}$ by $s \ge s_7 \approx 2.247$.
	\end{description}
	\end{description}
\end{proof}

\begin{lemma}
	Each schedule of Table~\ref{tab.LU} is a SE.
\end{lemma}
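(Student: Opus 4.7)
The plan is to combine the earlier propositions into a short structural reduction and then close the remaining gap by a routine case analysis. By Proposition~\ref{prop.LBNE} each schedule in Table~\ref{tab.LU} is already a NE, and by Propositions~\ref{prop.swap}, \ref{prop.a2in} and \ref{prop.a2only}, if such a schedule failed to be an SE then any beneficial coalition would contain $\xaa{}$ together with at least one more job on machine~1. Since $\xa{}=\xc{}=0$ throughout the table, this forces the coalition to contain both $\xaa{}$ and $\xcc{}$ on machine~1. For LB7 and LB8 we have $\xcc{}=0$, so no such coalition can exist and those two rows are done.

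For LB1--LB6 the remaining candidate coalitions all have the form $\{\xaa{},\xcc{}\}\cup S$ for some nonempty $S\subseteq\{\xbb{},\xd{},\xdd{}\}$, where $\xaa{}$ and $\xcc{}$ move to machine~2 and the jobs in $S$ move to machine~1. Using $\xa{}=\xc{}=0$, the resulting loads are
\[
\ell_1' \;=\; \sum_{j\in S}p_{1j} \qquad\text{and}\qquad \ell_2' \;=\; \ell_2 + \frac{\xaa{}}{s} + s\,\xcc{} - \sum_{j\in S}p_{2j},
\]
and the deviation is beneficial if and only if $\ell_2'<\ell_1$ (so that $\xaa{}$ and $\xcc{}$ strictly improve) and $\ell_1'<\ell_2$ (so that every $j\in S$ strictly improves). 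I would therefore verify, for each instance and each admissible $S$, that at least one of these two strict inequalities fails.

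The enumeration is short: seven nonempty subsets for LB1, three for each of LB2--LB5 (since $\xd{}=0$ in those rows), and only $S=\{\xbb{}\}$ for LB6 (since $\xd{}=\xdd{}=0$). For every such $S$ the check is a direct substitution of the closed-form values of Table~\ref{tab.LU} into the two displayed loads. The main obstacle is the bookkeeping: as in Proposition~\ref{prop.a2only}, the sign of several of the resulting inequalities depends on the value of $s$, so each row has to be verified inside its own $s$-interval $[s_{k-1},s_k]$, and strictness of the improvement must be tracked carefully where instance parameters degenerate at interval endpoints (for example, $\xdd{}=0$ in LB5 at $s=s_5=2$).
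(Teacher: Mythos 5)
Your structural reduction is exactly the paper's: Propositions~\ref{prop.LBNE}, \ref{prop.swap}, \ref{prop.a2in} and \ref{prop.a2only} force any beneficial coalition to contain both $\xaa{}$ and $\xcc{}$ plus a nonempty set $S$ of machine-2 jobs, which immediately settles LB7 and LB8 where $\xcc{}=0$, and your characterization of a beneficial deviation as $\ell_2'<\ell_1$ together with $\ell_1'<\ell_2$ is correct. Where you diverge is in how the remaining cases are closed. You propose enumerating all nonempty $S\subseteq\{\xbb{},\xd{},\xdd{}\}$ (about twenty interval-dependent checks), whereas the paper collapses the enumeration with a uniform bound: since $\ell_2'\ge \xaa{}/s+s\,\xcc{}$ no matter which jobs vacate machine~2, it suffices for LB2--LB6 to verify the single $S$-independent inequality $\xaa{}/s+s\,\xcc{}\ge\ell_1$, and for LB1 (where that inequality fails) a monotonicity argument --- the load removed from machine~2 is smallest when only the smallest job $\xd{}$ stays, and $\xd{}\le\xdd{}\le\xbb{}$ --- reduces everything to two extreme subsets. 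Your route would work, but be aware that the substance of the lemma lies precisely in the computations you defer to ``direct substitution'': as written, none of the roughly twenty inequalities is actually verified, so the proof is a correct plan rather than a complete argument. Adopting the paper's observation that $\ell_2'\ge\xaa{}/s+s\,\xcc{}$ independently of $S$ would cut your casework to one check per row for LB2--LB6 and make the LB1 analysis the only place where the choice of $S$ matters.
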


\begin{proof}
According to Propositions~\ref{prop.a2in} and \ref{prop.a2only}, we know that schedules of LB7 and LB8 are SE, since $\xaa{}$ is the only job in machine 1.
For LB1 to LB6, we will prove there is no such coalition of jobs can reduce their costs at the same time:
\begin{description}
	\item[LB1] According to Proposition~\ref{prop.a2in}, we only need to consider the case both $\xaa{}$ and $\xcc{}$ of machine 1 are in the coalition.
	However, we have
	\begin{description}
		\item[No $\{\xaa{},\xcc{}\}\text{-}\{\xbb{},\xdd{}\}$ swap:] $\ell_2' = \xd{} + \xaa{}/s + \xcc{} \cdot s = \frac{3s^2+2s-1}{s^3+2} \ge \ell_1 = \frac{s^3+s^2+s+1}{s^3+2}$ by $s \le s_1$.
	\end{description}
	Since $\frac{s^2-1}{s^3+2} \le \frac{s^3-s^2-s+2}{s^3+2} \le \frac{s}{s^3+2}$ by $s \le s_1$, i.e., $\xd{}\le \xdd{} \le \xbb{}$, we can know that $\xaa{}$ and $\xcc{}$ will not benefit even if the smallest job $\xd{}$ of machine 2 stays still.
	Thus all jobs in machine 2 should be in the coalition. Since
	\begin{description}
		\item[No $\{\xaa{},\xcc{}\}\text{-}\{\xbb{},\xd{},\xdd{}\}$ swap:] $\ell_1' = \xbb{}/s + \xd{} \cdot s + \xdd{} \cdot s = \frac{s^4 - s^2 +s +1}{s^3+2} \ge \ell_2 = \frac{s^3+1}{s^3+2}$ by $s \le s_1$,
	\end{description}
	we know that there is no such coalition of jobs that every job can reduce its cost by deviating simultaneously.

	\item[LB2] Similar as the former case, we consider the case both $\xaa{}$ and $\xcc{}$ are in the coalition.
	We get the same result by the fact that
	\begin{description}
		\item[No $\{\xaa{},\xcc{}\}\text{-}\{*\}$ swap:] $\ell_2' \ge \xaa{}/s + \xcc{} \cdot s = \frac{s^2+2s+1}{2s+1} = \ell_1$.
	\end{description}

	\item[LB3] Considering the case both $\xaa{}$ and $\xcc{}$ are in the coalition, we have that
	\begin{description}
		\item[No $\{\xaa{},\xcc{}\}\text{-}\{*\}$ swap:] $\ell_2' \ge \xaa{}/s + \xcc{} \cdot s = \frac{s+1}{s} = \ell_1$.
	\end{description}

	\item[LB4] Considering the case both $\xaa{}$ and $\xcc{}$ are in the coalition, we have that
	\begin{description}
		\item[No $\{\xaa{},\xcc{}\}\text{-}\{*\}$ swap:] $\ell_2' \ge \xaa{}/s + \xcc{} \cdot s = \frac{s^4-ss^3+2s^2}{s^3-s^2+s-1} \ge \ell_1 = \frac{s^3-s^2+2s-1}{s^3-s^2+s-1}$ by $s \ge s_3 \approx 1.755$.
	\end{description}

	\item[LB5] Considering the case both $\xaa{}$ and $\xcc{}$ are in the coalition, we have that
	\begin{description}
		\item[No $\{\xaa{},\xcc{}\}\text{-}\{*\}$ swap:] $\ell_2' \ge \xaa{}/s + \xcc{} \cdot s = \frac{s+1}{2} = \ell_1$.
	\end{description}

	\item[LB6] Considering the case both $\xaa{}$ and $\xcc{}$ are in the coalition, we have that
	\begin{description}
		\item[No $\{\xaa{},\xcc{}\}\text{-}\{*\}$ swap:] $\ell_2' \ge \xaa{}/s + \xcc{} \cdot s = \frac{1}{s^2-s}+s-1 \ge \ell_1 = \frac{s^2-s+1}{s^2-s}$ by $s \ge s_5 = 2$.
	\end{description}
\end{description}
\end{proof}

\subsection{Proofs of Lemmas~\ref{cla.1}-\ref{cla.4}} 
\label{app:proofs_of_PoA}
\begin{proof}[Proof of Lemma~\ref{cla.1}]
	Rows \ref{lem.a1-1}\ma 1, \ref{lem.a1-1}\ma 2 and \ref{lem.a2-0}\ma 4\mi a in Table~\ref{tab.UB} cover all the cases. Since each of the 3 bounds is below the claimed bound $\frac{s^3+s^2+s+1}{s^2+s+1}$, the lemma is proved.
\end{proof}
\begin{proof}[Proof of Lemma~\ref{cla.2}]
	Similar to the proof of SPoA, according to rows \ref{cla.2}\ma{}1\mi{}a to \ref{cla.2}\ma{}2 of Table~\ref{tab.UB} we have 
	\[
		\max\Big\{\min\big\{\frac{2(s+1)}{s+2},s \big\},s \Big\} \le \frac{s^3+s^2+s+1}{s^2+s+1}
	\]
	for $s \ge 1$, thus the lemma is proved.
\end{proof}
\begin{proof}[Proof of Lemma~\ref{cla.3}]
	The bound of row \ref{cla.3} of Table~\ref{tab.UB} implies this lemma.
\end{proof}
\begin{proof}[Proof of Lemma~\ref{cla.4}]
	We know that
	\begin{gather*}
	\ell_1 = \xaa{} +\xcc{} = \frac{s^3+s^2+s+1}{s^2+s+1}\,,\\
	\ell_2 = \xb{} +\xbb{} = \frac{s^3+1}{s^2+s+1} \,.
	\end{gather*}
	If $\xaa{}$ goes to machine 2, the cost of $\xaa{}$ becomes
	\[
		\ell_2' = \ell_2 + \xaa{}/s = \frac{s^3+s^2+s+1}{s^2+s+1}\,.
	\]
	Similarly, if $\xcc{}$ goes to machine 2, the cost of $\xcc{}$ becomes
	\[
		\ell_2' = \ell_2 + \xcc{}\cdot s = \frac{s^3+s^2+s+1}{s^2+s+1}\,.
	\]
	Thus the schedule is a NE.
\end{proof}

\end{document}